\documentclass[letterpaper,journal,final]{IEEEtran}

\newcommand{\subparagraph}{}

\usepackage{graphicx,psfrag}
\usepackage{amsmath,amssymb,amsfonts,mathrsfs,mathtools}
\usepackage{color,epstopdf}
\usepackage{algorithmic}
\usepackage{enumerate} 
\usepackage{floatrow}
\usepackage{float}
\usepackage{stfloats}
\usepackage{flushend}
\usepackage{bbold}
\usepackage{dsfont}
\usepackage{mathrsfs}
\usepackage{relsize}
\usepackage{relsize}
\usepackage{comment}

\usepackage{amssymb}
\usepackage{algorithm}
\usepackage{algorithmic}

\newtheorem{example}{Example}
\newtheorem{rem}{Remark}
\newtheorem{assumption}{Assumption}
\newtheorem{definition}{Definition}
\newtheorem{theorem}{Theorem}

\newtheorem{proposition}{Proposition}

\newcommand{\bbm}{\begin{bmatrix}}
\newcommand{\ebm}{\end{bmatrix}}

\def\qedp{\hspace*{\fill}~{\tiny $\blacksquare$}}
\def\be{\begin{equation}}
\def\ee{\end{equation}}
\def\ba{\begin{array}}
\def\ea{\end{array}}
\def\eqa{\begin{eqnarray}}
\def\eqe{\end{eqnarray}}

\definecolor{darkgreen}{rgb}{0.0, 0.55, 0.0}
\definecolor{amaranth}{rgb}{0.9, 0.17, 0.31}
\usepackage[prependcaption,colorinlistoftodos]{todonotes}

\def\qedp{\hspace*{\fill}~{\tiny $\blacksquare$}}

\begin{document}

\title{Learning neural controllers for nonlinear systems from data\\
		\author{Zhongjie Hu, Zhi-Wei Liu, Chen Wang
			\thanks{Zhongjie Hu, Zhi-Wei Liu and Chen Wang are with
				the Key Laboratory of Image Processing
				and Intelligent Control and School of Artificial Intelligence and Automation, Huazhong University of Science and Technology, Wuhan 430074, China.}%
		}
	}
	
	\maketitle
	
	\begin{abstract}
		This article addresses the problem of designing neural feedback controllers for unknown  nonlinear systems. We propose an indirect data-driven framework that uses offline data to identify the system dynamics, upon which a neural feedback controller and a neural Lyapunov function are jointly synthesized. Input constraints are enforced by integrating a hard-saturation structure into the controller architecture. Robust synthesis conditions are derived to account for data perturbations during identification. Formal stability is certified by combining SMT verification with local Lyapunov analysis near the equilibrium. Numerical examples validate the effectiveness of the proposed framework. 
	\end{abstract}
	
	\begin{IEEEkeywords}
		Data-driven control, neural networks, formal verification, nonlinear control systems, robust control.
	\end{IEEEkeywords}
	
	\section{Introduction}
	\label{sec:intro}
	
	\IEEEPARstart{T}{HE} rapid advances in deep learning and machine learning have revolutionized many branches of science and engineering, including robotics~\cite{lee2020learning}, magnetic plasma control~\cite{degrave2022magnetic}, and power systems~\cite{wang2027model}, which has established \emph{learning for control} as a prevalent paradigm for tackling complex tasks. Methodologically, learning-based control can be categorized into \emph{indirect} methods~\cite{pillonetto2014kernel, ljung2010perspectives}, which first identify the system model to enable model-based control synthesis, and \emph{direct} methods~\cite{de2019formulas, campi2002virtual}, which learn controllers directly from data without explicitly attempting to identify the system. 
	
	Within these frameworks, a major body of research formulates controller synthesis as convex optimization, such as Semidefinite Programming (SDP)~\cite{dprt2023cancellation}, \cite{van2020noisy}, or Sum-of-Squares (SOS) formulations~\cite{guo2021data}, \cite{huang2026feedback}. While offering appealing computational tractability, these formulations typically restrict candidate controllers to linear feedback or fixed-structure nonlinear laws \cite{dpt2023arc}, \cite{martin2023guarantees}. Such structural constraints limit the expressive capacity of the controller, thus imposing a fundamental bottleneck on achievable closed-loop performance.  Furthermore,  candidate Lyapunov functions within these paradigms are predominantly confined to standard quadratic forms or fixed-degree polynomial functions \cite{ bisoffi2022data}, \cite{guo2024data}, which restricts the search space for valid stability certificates.
	
	In this paper, we are interested in leveraging neural networks to parameterize both the controller and the Lyapunov function. By the universal approximation theorem \cite{hornik1989multilayer}, neural networks are capable of approximating any continuous function with arbitrary accuracy.  Employing neural architectures thus significantly enhances the expressivity of both search spaces. 	
	Despite their expressive capacity, deploying neural networks in safety-critical applications raises concerns regarding reliability and formal correctness \cite{manchester2026neural}, \cite{mestres2025neural}. In fact, neural networks are trained on finite, discrete data samples, while stability must hold continuously over the entire domain, which makes rigorous stability guarantees non-trivial \cite{dai2021lyapunov}, \cite{abate2020formal}. In addition, stability conditions involving neural networks are inherently non-convex, rendering classical SDP or SOS optimization tools inapplicable.
	
	In recent years, Satisfiability Modulo Theories (SMT) has been increasingly utilized for formal stability certification \cite{edwards2025general, ahmed2020automated}. In general, SMT solvers operate by formulating the negation of desired property conditions into a first-order logical formula, and then searching for counterexamples within the domain of interest \cite{nieuwenhuis2006solving, barrett2018satisfiability}. A determination of unsatisfiability (UNSAT) guarantees the absence of counterexamples, thereby certifying that the target conditions hold across the entire domain. Representative SMT solvers include dReal \cite{gao2013dreal, gao2012delta}, Z3 \cite{de2008z3}, and Reluplex \cite{katz2017reluplex}. Building upon these solvers, the counterexample-guided inductive synthesis (CEGIS) framework, which is typically structured as a learner-falsifier loop, was developed for the automated formal synthesis of neural controllers and stability certificates \cite{abate2018counterexample, abate2021fossil, edwards2024fossil}. This paradigm was first introduced to co-synthesize neural controllers and Lyapunov functions for known nonlinear systems \cite{chang2019neural}, and was subsequently extended to unknown systems by employing Lipschitz-based error bounds to bridge neural system identification and robust SMT verification \cite{zhou2022neural}. More recently, to incorporate performance specifications beyond mere stabilization,  this framework was generalized to input-output dissipativity, where neural storage functions and supply rates are learned to guarantee stability along with optimality \cite{wang2026designing}.

	\textit{Contribution:} The purpose of this paper is to design neural feedback controllers for unknown nonlinear systems with provable stability guarantees.	
	We focus on systems whose vector fields are expressible as combinations of known basis functions. To achieve this, our approach relies on an \emph{indirect} data-driven framework, wherein offline collected data are first utilized to identify the underlying system dynamics, upon which the joint synthesis and formal verification of neural controllers and neural Lyapunov functions are subsequently executed. Building upon the paradigms in \cite{chang2019neural, zhou2022neural}, our framework extends these results across several key dimensions. 
	
	First, we establish a structured optimization scheme and network architecture that guarantee exact theoretical soundness while enhancing numerical tractability. Specifically, a linear state normalization scheme maps the bounded domain of interest onto a well-conditioned unit hypercube. In addition, we introduce a hard bias-elimination network architecture that strictly enforces exact zero-equilibrium vanishing by construction, complemented by a pre-training stage to establish a well-conditioned initial Lyapunov topology. Building upon this architecture, Lyapunov stability conditions are formulated into a sample-based loss function, driven by a counterexample search paradigm  that accelerates training efficiency by uncovering worst-case states that maximize loss violations. Remarkably, to overcome SMT solver conservatism near the equilibrium, we partition the domain into an outer SMT domain and an inner local exclusion ball, certifying the former via interval-based SMT solvers and the latter analytically via system linearization and SDP optimization.		
	Next, to enforce input constraints by construction, we integrate a hard-saturation architecture into the neural feedback design. By combining a $\mathrm{tanh}$ activation function at the controller output with linear input scaling matrices, actuator limits are strictly guaranteed.		
	Finally, to account for additive perturbations in offline identification data, robust Lyapunov synthesis conditions are derived across the entire set of data-consistent models. Crucially, rather than treating the uncertainty trade-off parameter as a pre-selected hyperparameter, our framework parameterizes this decision variable via an exponential mapping. This strictly enforces its positivity constraint by construction and enables its co-optimization along with neural network weights during backpropagation.
	
	\textit{Outline.} The remainder of this article is organized as follows. Section~\ref{sec:problem} formulates the system setup and problem statement. Section~\ref{sec:design} presents the identification scheme and neural controller formulation. Section~\ref{sec:synthesis_smt} details the synthesis and verification pipeline, including state normalization, network pre-training, loss function design, counterexample generation, and formal stability verification. Section~\ref{sec:input_constraints} extends this framework to handle input constraints, while Section~\ref{sec:robustness} develops the robust synthesis formulation under data perturbations. Finally, Section~\ref{sec:conclusion} concludes the paper.

	\textit{Notation.} Throughout this paper, $\mathbb{R}_{>0}^n$ denotes the set of $n$-dimensional real vectors with positive entries, and $\mathbb{S}^n$ denotes the set of $n \times n$ real symmetric matrices. A symmetric matrix $M$ is positive definite (semidefinite) if $M \succ 0$ ($M \succeq 0$), and negative definite (semidefinite) if $M \prec 0$ ($M \preceq 0$). In symmetric block matrices, the asterisk $*$ represents the symmetric transpose of the corresponding off-diagonal blocks.  The notation $\operatorname{diag}(v_1, \dots, v_n)$ represents a diagonal matrix with entries $v_1, \dots, v_n$ on its main diagonal. $\|v\|_2$ and $\|v\|_\infty$ denote the 2-norm and the infinity norm of a vector $v$, respectively, while $\|M\|_2$ denotes the induced spectral norm of a matrix $M$. $\lambda_{\min}(M)$ and $\lambda_{\max}(M)$ denote the smallest and largest eigenvalues of $M$, respectively. $\nabla^2 h(x)$ denotes the Hessian matrix of a scalar function $h$. For a set $\mathcal{S} \subset \mathbb{R}^n$, $\partial \mathcal{S}$ represents its boundary, and $\Pi_{\mathcal{S}}(v)$ denotes the projection of a vector $v$ onto $\mathcal{S}$. The symbol $\vee$ denotes the logical OR operator. For real numbers $a, b \in \mathbb{R}$, $\min(a, b)$ and $\max(a, b)$ denote the minimum and maximum of $a$ and $b$, respectively. Finally, the scalar functions $\operatorname{tanh}(\cdot)$, $\operatorname{ReLU}(\cdot)$, and $\operatorname{sign}(\cdot)$ are defined as $\operatorname{tanh}(x) := \frac{e^x - e^{-x}}{e^x + e^{-x}}$, $\operatorname{ReLU}(x) := \max(0, x)$, and $\operatorname{sign}(x)$ returns $1$, $-1$, and $0$ for $x > 0$, $x < 0$, and $x = 0$, respectively, with all functions operating element-wise on vectors.
			
			\section{Set-up and Problem Formulation}\label{sec:problem}
			
			Consider a continuous-time nonlinear system of the form
			\begin{equation}
				\label{sys.d}
				\dot x = f(x,u)
			\end{equation}
			where $x\in \mathbb{R}^n$ is the state and $u\in \mathbb{R}^m$ is the control input, and  $f: \mathbb{R}^n \times \mathbb{R}^m \to \mathbb{R}^n$ is an unknown vector field. Without loss of generality, 
			we assume that the origin $(x_e,u_e)=(0,0)$ is a known equilibrium point of the system. 
			The objective is to design a state feedback controller that stabilizes the dynamics around the origin.   
			
			To facilitate the control design, we formalize the prior knowledge regarding the vector field $f$ through the following standing assumption.
			\begin{assumption} \label{ass:f}
				We know a vector-valued function $Z: \mathbb R^{n+m} \rightarrow \mathbb R^s$  such that 
				$f(x,u) = AZ(x,u)$ for some matrix $A \in \mathbb R^{n \times s}$. \hfill $\square$ 
			\end{assumption} 
			
			Under Assumption \ref{ass:f}, system \eqref{sys.d} can be equivalently rewritten as
			\begin{equation}
				\label{sys.d2}
				\dot x = A Z(x,u)
			\end{equation}  
			with $A$ unknown. Note that formulating $f(x,u)$ via a dictionary of functions as $AZ(x,u)$ is a standard paradigm in nonlinear system identification \cite{brunton2016discovering}. To account for potential inaccuracies in our prior knowledge, the function library $Z$ is permitted to include redundant terms that may not actually appear in $f(x,u)$.
			
			\noindent \emph{Problem.}~We are interested in the design of a neural feedback controller of the form
			\begin{equation}
				\label{eq:control}
				u=\phi (x)
			\end{equation}  
			where $\phi: \mathbb R^n \rightarrow \mathbb R^m$ is a feedforward neural network, such that the closed-loop system 
			\begin{equation} \label{eq:closed}
				\dot x = A Z(x,\phi (x))
			\end{equation}
			is asymptotically stable around the origin. 
			
			\section{Control Design}\label{sec:design}
			To address the controller synthesis problem, we consider an \emph{indirect} method consisting of two primary steps. First, we identify the system model from collected offline data. Second, building upon the identified model, we synthesize the neural controller and its corresponding neural Lyapunov function.
			
			\subsection{System identification}
			
			To obtain an identified model of the system, we conduct an experiment on the system  \eqref{sys.d} to collect the dataset
			\begin{equation} \label{dataset}
				\mathbb D := \left\{ (x(t_i),u(t_i), \dot x(t_i)) \right\}_{i=0}^{T-1}
			\end{equation} 
			where $T>0$ is the number of samples and $0\le t_0< t_1<\ldots< t_{T-1}$ are the sampling times. The samples satisfy $\dot  x(t_i)=AZ(x(t_i),u(t_i))$ for $i=0,\ldots,T-1$. $Z(x(t_i),u(t_i))$ can be readily evaluated at $(x(t_i),u(t_i))$ since $Z$ is known.  We organize the dataset into the following matrices:
			\begin{subequations}\label{eq:data}
				\begin{align}
					X_1 :=& \left[ \begin{matrix} \dot x(t_0) & \dot x(t_1) & \cdots & \dot x(t_{T-1})  \end{matrix} \right] 
					\in \mathbb R^{n \times T} \,, \label{eq:data1} \\
					Z_0 :=& \left[ \begin{matrix} Z(x(t_0),u(t_0)) & Z(x(t_1),u(t_1)) \end{matrix} \right.\nonumber
					\\&\quad \left. \begin{matrix} \cdots & Z(x(t_{T-1}),u(t_{T-1})) \end{matrix} \right]
					\in \mathbb R^{s \times T},   \label{eq:data2}
				\end{align} 
			\end{subequations}
			and these data matrices satisfy
			\begin{equation} \label{dataiden_free}
				X_1=A Z_0.
			\end{equation} 
			For identification purpose, we assume that matrix $Z_0$ has full row rank. This condition necessitates $T \ge s$ and can be interpreted as a data \emph{richness} requirement, which can typically be fulfilled by acquiring additional samples.
			\begin{assumption}\label{ass:Z0} 
				$Z_0$ has full row rank. \hfill $\square$
			\end{assumption}
			
			By \eqref{dataiden_free}, Assumption \ref{ass:Z0} implies that
			\begin{equation}\label{eq:A}
				A= X_1 Z_0^\dag,
			\end{equation} 
			where $Z_0^\dag \in \mathbb R^{T \times s}$ denotes the right inverse of $Z_0$. Hence, the identified model can be expressed as
			\begin{equation}\label{eq:iden}
				\dot x = X_1 Z_0^\dag Z(x,u).
			\end{equation} 
			
			\subsection{Neural controller design}
			
			The first statement below presents the main result of this section, which establishes a sufficient condition for asymptotic stabilization through the joint design of a neural feedback controller $u=\phi (x)$ and a neural Lyapunov function $V(x)$ with $V: \mathbb R^n \rightarrow \mathbb R$ a feedforward neural network.
			
			We first recall the definition of region of attraction.
			\begin{definition} \label{def:roa}
				Let $x_e$ be an asymptotically stable equilibrium point for the system $\dot{x} = f(x)$. A set $\mathcal{R} \subseteq \mathbb{R}^n$ is called a region of attraction (ROA) for the system relative to $x_e$ if, for every initial state $x(0) \in \mathcal{R}$, the corresponding state trajectory $x(t)$ satisfies	$\lim_{t \to \infty} x(t) = x_e.$				
			\end{definition}
			
			\begin{theorem} \label{thm:control.free}
				Consider the nonlinear system \eqref{sys.d}. Let Assumptions \ref{ass:f} and \ref{ass:Z0} hold. 
				Let 
				$\mathcal{X}\subseteq \mathbb{R}^n$ be a set containing the origin. 
				Suppose there exist neural networks $\phi$ and $V$ satisfying $\phi(0)=0$ and $V(0)=0$ such that for each $x \in \mathcal{X}\setminus \{0\}$
				\begin{subequations}\label{stability.con.free}
					\begin{align}
						&V(x)>0, \label{stability.con1.free}\\
						&\frac{\partial V(x)}{\partial x} X_1 Z_0^\dag Z(x,\phi(x)) <0. \label{stability.con2.free}
					\end{align} 
				\end{subequations}
				Then, the origin is asymptotically stable for the closed-loop system \eqref{eq:closed} and any sublevel set $\mathcal{V}:=\left\{x\in \mathbb{R}^n\colon V(x) \le \delta\right\}$, with $\delta>  0$ such that $\mathcal{V}$ is contained in $\mathcal{X}$ defines an estimate of the ROA
				relative to $x=0$.
			\end{theorem}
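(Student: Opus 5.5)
The proof reduces to classical Lyapunov theory once the identified model is shown to coincide with the true one. The first step uses Assumptions~\ref{ass:f} and~\ref{ass:Z0}. Since $X_1 = AZ_0$ and $Z_0$ has full row rank, the right inverse satisfies $Z_0Z_0^\dag = I_s$. Hence $X_1Z_0^\dag = AZ_0Z_0^\dag = A$, as already noted in \eqref{eq:A}. Consequently, for every $x$,
\[
\frac{\partial V(x)}{\partial x} X_1 Z_0^\dag Z(x,\phi(x)) = \frac{\partial V(x)}{\partial x} A Z(x,\phi(x)) = \dot V(x),
\]
where $\dot V$ is the derivative of $V$ along the closed-loop system \eqref{eq:closed}. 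Conditions \eqref{stability.con.free} are therefore exactly the classical Lyapunov conditions for \eqref{eq:closed} on $\mathcal{X}\setminus\{0\}$.

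Next, I will record the regularity needed. $V$ must be $C^1$ so that $\partial V/\partial x$ makes sense (smooth activations such as $\tanh$). The closed-loop vector field $x\mapsto AZ(x,\phi(x))$ must be locally Lipschitz, so that solutions exist and are unique; this holds if $Z$ is locally Lipschitz and $\phi$ is a feedforward network with Lipschitz activations. Further, $AZ(0,\phi(0)) = AZ(0,0) = f(0,0) = 0$ by $\phi(0)=0$, so the origin is an equilibrium of \eqref{eq:closed}. I also need $\mathcal{X}$ to contain a neighborhood of the origin for the Lyapunov theorem to apply; I will take this as implicit in the statement, for instance since $\mathcal{X}$ is a domain of interest with nonempty interior around $0$.

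With these in place, Lyapunov's direct theorem (e.g. Khalil, Thm.~4.1) gives asymptotic stability of the origin, because $V(0)=0$, $V>0$ and $\dot V<0$ on a punctured neighborhood.

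For the ROA estimate, I fix $\delta>0$ with $\mathcal{V}\subseteq\mathcal{X}$. I will argue in four steps.
\begin{enumerate}
\item $\mathcal{V}$ is forward invariant. Along any trajectory starting in $\mathcal{V}$, $V$ is nonincreasing as long as the trajectory stays in $\mathcal{X}$. Since $\mathcal{V}\subseteq\mathcal{X}$, the trajectory can never reach a point with $V>\delta$: at the first exit time, $V$ would equal $\delta$ with positive derivative, contradicting $\dot V\le 0$.
\item Solutions are defined for all $t\ge 0$. If $\mathcal{V}$ is compact, this follows from standard extension arguments. Compactness is the main subtle point. I will ensure it by assuming $\mathcal{X}$ is bounded (as in the later SMT setting, where $\mathcal{X}$ is a box), which makes $\mathcal{V}$ closed, by continuity of $V$, and bounded.
\item LaSalle's invariance principle applies on the compact positively invariant set $\mathcal{V}$. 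Every trajectory converges to the largest invariant subset of $\{x\in\mathcal{V}:\dot V(x)=0\}$, which equals $\{0\}$ by \eqref{stability.con2.free}.
\item Every initial state in $\mathcal{V}$ therefore has $x(t)\to 0$, so $\mathcal{V}$ is an ROA estimate in the sense of Definition~\ref{def:roa}.
\end{enumerate}

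I expect the only real obstacle to be the compactness of $\mathcal{V}$ in step 2, together with the tacit regularity assumptions. The algebra linking the identified model to the true one is immediate.
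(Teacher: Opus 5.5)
Your argument is correct and follows the same route as the paper. The paper's proof is the single assertion that \eqref{stability.con.free} makes $V$ a Lyapunov function for \eqref{eq:closed}; you additionally spell out the identity $X_1Z_0^\dag=A$ and the tacit hypotheses ($\mathcal{X}$ containing a neighborhood of $0$, $C^1$ and local Lipschitz regularity, and boundedness of $\mathcal{X}$ so that $\mathcal{V}$ is compact), which are genuinely needed since without compactness of $\mathcal{V}$ the ROA claim can fail for a $V$ that is not radially unbounded, with one small fix in your invariance step: at the first exit time $t^*$ you only get a nonnegative derivative of $t\mapsto V(x(t))$, so the contradiction must come from the strict inequality $\dot V(x(t^*))<0$, which holds because $V(x(t^*))=\delta>0$ forces $x(t^*)\neq 0$.
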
 
			
			\emph{Proof.} The conditions \eqref{stability.con.free} imply that $V$ serves as a valid Lyapunov function for the closed-loop system \eqref{eq:closed}, which completes the proof. \qedp

			\subsection{Comparison With Other Methods} 
			To highlight the methodological differences, it is useful to compare the proposed approach with existing data-driven methods developed for the nonlinear system \eqref{sys.d}, such as~\cite{dprt2023cancellation, hu2025enforcing, guo2024data}. 
			
			First, consider the approximate nonlinearity cancellation method in~\cite{dprt2023cancellation} and the data-driven contraction framework in~\cite{hu2025enforcing}. By introducing the augmented state-input coordinate $\xi :=[\begin{smallmatrix} x \\ u \end{smallmatrix}]$, system \eqref{sys.d} can be recast into $\dot x = \overline A \mathcal{Z}(\xi) $. Here, $\overline A \in \mathbb R^{n \times R}$ and $\mathcal{Z}(\xi)= \begin{bmatrix} \xi \\ \mathcal{Q}(\xi) \end{bmatrix}$ constitutes a  rearrangement of  the original dictionary $Z(x,u)$, thereby achieving a decoupling between the linear terms $\xi$ and the remaining nonlinearities captured by $\mathcal{Q}(\xi)$. To further yield a transformed system with a constant input matrix, an integral control action $\dot u = v$ is implemented, where $v \in \mathbb{R}^m$ serves as the new control input. Consequently, by partitioning the matrix as $\overline A := \begin{bmatrix} \overline A_1 & \overline A_2 \end{bmatrix}$ with $\overline A_1 \in \mathbb R^{n \times (n+m)}$ and $\overline A_2 \in \mathbb R^{n \times (R-n-m)}$, the overall augmented system can be compactly expressed as $\dot \xi = \mathcal{A} \mathcal{Z}(\xi)+\mathcal{B} v$, where $\mathcal{A} := \begin{bmatrix} \, \overline A_1 & \overline A_2 \\ {0}_{m\times (n+m)} & {0}_{m\times (R-n-m)} \\ \end{bmatrix}$ and $\mathcal{B} := \begin{bmatrix} 0_{n\times m} \\ I_m \end{bmatrix}$. The controller to be synthesized takes the form $v = \mathcal{K} \mathcal{Z}(\xi)$, where $\mathcal{K} = [\,\mathcal{K}_1 \quad \mathcal{K}_2\,]$ with $\mathcal{K}_1 \in \mathbb{R}^{m \times (n+m)}$ and $\mathcal{K}_2 \in \mathbb{R}^{m \times (R-n-m)}$.
			
			In \cite{dprt2023cancellation}, controller synthesis relies on a nonlinearity cancelation framework aimed at stabilizing the linear part while minimizing the impact of the nonlinear part in closed-loop. This guarantees asymptotic stability of the origin provided that the underlying SDP is feasible and the nonlinearities satisfy $\lim_{\xi \rightarrow 0} \frac{\|\mathcal{Q}(\xi)\|}{\|\xi\|}=0$. On the other hand, in \cite{hu2025enforcing}, the method focuses on enforcing exponential contractivity on the closed-loop dynamics. Compared with our framework, a fundamental limitation shared by \cite{dprt2023cancellation} and \cite{hu2025enforcing} is that their feasibility necessitates the stabilizability of the structured pair $(\begin{bmatrix} \, \overline A_1 \\ 0 \end{bmatrix} , \mathcal{B} )$, which is structurally difficult to satisfy due to the intrinsic zero-blocks within the augmented system matrices. In contrast, our neural controller is synthesized by directly taking the entire nonlinear dynamics into account, thereby bypassing the linear stabilizability requirements
			
			A further distinction lies in the capacity to achieve \emph{global} stability guarantees. Specifically, the framework in \cite{hu2025enforcing} can theoretically establish global stability provided that the contractivity condition holds over $\mathcal{X}=\mathbb{R}^n$. On the other hand, global stability is structurally unachievable within the framework of \cite{dprt2023cancellation} because the specific zero-block configuration of the input matrix $\mathcal{B}$ renders the exact nonlinearity cancellation, namely, $\begin{bmatrix} \, \overline A_2 \\ 0 \end{bmatrix} + \mathcal{B} \mathcal{K}_2 =0$,  impossible to satisfy. 			
			From a formal verification perspective, although algebraic SMT solvers, such as Z3,  can analyze unbounded spaces, their theoretical guarantees are confined to polynomial systems. For dynamics involving transcendental nonlinearities, such as the system's $\sin(\cdot)$ terms or the neural network's $\tanh(\cdot)$ activations, global verification is mathematically undecidable. To accommodate these transcendental functions, we utilize interval-based SMT solvers. Crucially, such solvers rely on interval constraint propagation over compact domains, and hence restricting the verification space to a bounded domain $\mathcal{X}$ is an essential requirement for returning formal stability certificates.
			
			Another relevant line of research is the Taylor's expansion-based data-driven framework in \cite{guo2024data}. An appealing feature of this approach is that it does not require prior knowledge of the underlying function library. Instead, it approximates the unknown dynamics \eqref{sys.d} around the equilibrium via Taylor's polynomials, and treats the truncated remainder as a bounded uncertainty. Nonetheless, to maintain computational tractability within SOS and SDP formulations, the candidate controllers and Lyapunov functions in~\cite{guo2024data} are restricted to linear or fixed-degree polynomial templates. Furthermore, the truncated remainder grows unbounded as $\Vert{}x\Vert{} \to \infty$,  which inherently confines the theoretical guarantees to a local region.
			
			\section{Neural Controller Synthesis and SMT-Based Lyapunov Verification} \label{sec:synthesis_smt}
			This section develops the data-driven synthesis and formal verification framework for the neural controller and neural Lyapunov function introduced in Theorem \ref{thm:control.free}. The overall scheme relies on minimizing a sample-based loss function to enforce stability conditions, while employing SMT solvers along with local stability analysis to formally certify that the stability guarantees hold across the entire domain of interest.
			
			\subsection{Neural Synthesis Framework and Optimization Scheme}
			\label{subsec:neural_synthesis_optimization}
			\subsubsection*{Dynamic Data Generation and State Normalization} 
			Rather than relying on a static offline dataset, a dynamic data-augmentation strategy is adopted to train both the neural controller and the neural Lyapunov function.  Specifically, the training set $\overline{\mathbb{D}}$ is initially populated via uniform state sampling  from the domain of interest. Subsequently, an adaptive, counterexample-guided mechanism is employed to iteratively expand the dataset with critical samples, as detailed later in Sections \ref{subsec:counterexample_generation} and \ref{subsec:stability_formal_verification}. 
			
			In multivariable nonlinear systems, the states typically operate on disparate physical scales, and directly injecting raw state data into neural networks would induce numerical ill-conditioning and sluggish gradient propagation, especially given that standard activation functions, such as $\tanh(\cdot)$, suffer from vanishing gradients outside the interval $[-1, 1]$. Hence, state normalization serves not merely as an empirical data-preprocessing step, but a fundamental requirement for stable and effective network optimization. 
						
			To this end, we introduce a linear coordinate transformation for state normalization. Consider the bounded domain $\mathcal{X} := \{x \in \mathbb{R}^n \mid -b_i \le x_i \le b_i, \; i = 1, \dots, n\}$, where $b := [b_1, \dots, b_n]^\top \in \mathbb{R}^n_{>0}$ denotes the vector of coordinate-wise upper bounds. By scaling the raw state $x$ relative to $b$, the normalized state vector is defined as $z := [z_1, \dots, z_n]^\top$ with $z_i = x_i / b_i, i = 1, \dots, n$. This transformation maps the domain $\mathcal{X}$ onto a well-conditioned unit hypercube $[-1, 1]^n$, and $z$ are subsequently utilized for neural network training. Let $S := \text{diag}(\frac{1}{b_1}, \frac{1}{b_2}, \dots, \frac{1}{b_n})$ and hence $z = Sx$. Consequently, the transformed dynamics governing $z$ are formulated as:
			\begin{equation}
				\label{sys.d3}
			\dot z = M \Phi(z, u)
			\end{equation} 			 
			where $\Phi(z,u) := T_Z Z(x,u)$ represents the normalized function library, $T_Z \in \mathbb{R}^{s \times s}$ is a non-singular diagonal matrix, and $M \in \mathbb{R}^{n \times s}$ denotes the system matrix to be identified. The normalized data matrices are organized from the dataset $\mathbb{D}$ as:
			\begin{subequations} \label{eq:normalized_data_matrices}
				\begin{align}
					Z_1 :=& \begin{bmatrix} S\dot{x}(t_0) & S \dot{x}(t_1) & \cdots & S \dot{x}(t_{T-1}) \end{bmatrix} \in \mathbb R^{n \times T}, \\
					\Phi_0 :=& \left[ \begin{matrix} \Phi(S x(t_0),u(t_0)) & \Phi(S x(t_1),u(t_1)) \end{matrix} \right.\nonumber
					\\&\quad \left. \begin{matrix} \cdots & \Phi(S x(t_{T-1}),u(t_{T-1})) \end{matrix}  \right] \in \mathbb R^{s \times T}.
				\end{align}
			\end{subequations}
			These matrices satisfy $Z_1 = M\Phi_0$, and $\Phi_0$ preserves the full row rank property of $Z_0$ since $T_Z$ is non-singular. Consequently, $M$ is identified as $M = Z_1 \Phi_0^\dagger$,
			where $\Phi_0^\dagger \in \mathbb{R}^{T \times s}$ denotes the right inverse of $\Phi_0$. The identified normalized model is thus expressed as
			\begin{equation} \label{eq:normalized_model_identified}
				\dot{z} = Z_1 \Phi_0^\dagger \Phi(z,u).
			\end{equation}
			As formalized below, the stability of the normalized system \eqref{eq:normalized_model_identified} is equivalent to that of the original physical dynamics \eqref{eq:iden}.
			
			\begin{proposition} \label{prop:equivalence}
				Consider the system \eqref{eq:iden} and the transformed normalized system \eqref{eq:normalized_model_identified}, subject to the state transformation $z=Sx$. Let $\mathcal{X}_z := [-1, 1]^n$ denote the unit hypercube, and let $\mathcal{X} \subseteq \mathbb{R}^n$ be the domain containing the origin such that $\mathcal{X}_z = S\mathcal{X}$. Suppose there exist neural networks $\psi$ and $V_z$ satisfying $\psi(0)=0$ and $V_z(0)=0$. Then, the stability conditions
				\begin{subequations}\label{stability.con.normalized.prop}
					\begin{align}
						&V_z(z)>0, \label{stability.prop.norm1}\\
						&\frac{\partial V_z(z)}{\partial z} Z_{1} 	\Phi_{0}^{\dagger} \Phi(z,\psi(z)) < 0, \label{stability.prop.norm2}
					\end{align}
				\end{subequations}
				hold for all $z\in\mathcal{X}_z \backslash\{0\}$ if and only if 
				\begin{subequations}\label{stability.con.physical.prop}
					\begin{align}
						&V_z(Sx)>0, \label{stability.prop.phys1}\\
						&\frac{\partial V_z(Sx)}{\partial x} X_{1} 	Z_{0}^{\dagger} Z(x,\psi(Sx)) < 0, \label{stability.prop.phys2}
					\end{align}
				\end{subequations}
				hold for all $x\in \mathcal{X} \backslash\{0\}$.
			\end{proposition}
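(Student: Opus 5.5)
The plan is a change of variables: since $S$ is a nonsingular diagonal matrix, $x\mapsto Sx$ is a bijection between $\mathcal{X}\setminus\{0\}$ and $\mathcal{X}_z\setminus\{0\}$. Both conditions can therefore be compared pointwise at $z=Sx$, and it suffices to show that at each such pair the left-hand sides of \eqref{stability.con.normalized.prop} and \eqref{stability.con.physical.prop} coincide. For \eqref{stability.prop.norm1} versus \eqref{stability.prop.phys1} this is immediate, since $V_z(z)=V_z(Sx)$.

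For the derivative conditions I proceed in three steps.

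\textbf{Chain rule.} It gives
\[
\frac{\partial V_z(Sx)}{\partial x}=\frac{\partial V_z(z)}{\partial z}\Big|_{z=Sx} S .
\]

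\textbf{Relating the identified matrices.} The goal is $Z_1\Phi_0^\dagger\Phi(z,u)=S X_1 Z_0^\dagger Z(x,u)$ for all $(x,u)$ with $z=Sx$. By construction $Z_1=SX_1$ and $\Phi_0=T_Z Z_0$. Take the right inverse $\Phi_0^\dagger=Z_0^\dagger T_Z^{-1}$; this is a valid right inverse because $T_Z Z_0 Z_0^\dagger T_Z^{-1}=I$. Using $\Phi(z,u)=T_Z Z(x,u)$, we obtain
\[
Z_1\Phi_0^\dagger\Phi(z,u)=SX_1Z_0^\dagger T_Z^{-1}T_Z Z(x,u)=SX_1Z_0^\dagger Z(x,u).
\]
Setting $u=\psi(Sx)=\psi(z)$ in this identity and premultiplying by $\partial V_z/\partial z$ shows that the left-hand sides of \eqref{stability.prop.norm2} and \eqref{stability.prop.phys2} are equal. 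Both implications then follow from the bijection.

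\textbf{Independence of the choice of right inverse.} This is the only point requiring care, and I expect it to be the main obstacle. The paper does not specify which right inverse $Z_0^\dagger$ or $\Phi_0^\dagger$ denotes, and right inverses are not unique when $T>s$. I would resolve this using \eqref{dataiden_free} together with Assumption~\ref{ass:Z0}. For any right inverse, $X_1Z_0^\dagger=AZ_0Z_0^\dagger=A$. Likewise $Z_1\Phi_0^\dagger=SAT_Z^{-1}\Phi_0\Phi_0^\dagger=SAT_Z^{-1}$, since $Z_1=SX_1=SAZ_0=SAT_Z^{-1}\Phi_0$. Hence
\[
Z_1\Phi_0^\dagger\Phi(z,u)=SAZ(x,u)=SX_1Z_0^\dagger Z(x,u)
\]
regardless of which right inverses are used. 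This version avoids committing to a specific formula such as the Moore--Penrose pseudoinverse, and it is the argument I would actually write.
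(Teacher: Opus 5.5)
Your proof is correct and follows the same route as the paper. The bijection $z=Sx$ settles the positivity condition, and the chain rule $\frac{\partial V_z(Sx)}{\partial x}=\frac{\partial V_z(z)}{\partial z}S$ together with the identity $SX_1Z_0^\dagger Z(x,u)=Z_1\Phi_0^\dagger\Phi(z,u)$ equates the derivative conditions. The paper only asserts that identity, on the grounds that both identified models describe $\dot z$. Your computation makes it explicit via $X_1Z_0^\dagger=A$ and $Z_1\Phi_0^\dagger=SAT_Z^{-1}$; this holds for any choice of right inverses because it uses the noise-free relation $X_1=AZ_0$.
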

			
			\emph{Proof.} 
			Since $S$ is a non-singular diagonal matrix, $z=Sx$ defines a diffeomorphism between $\mathcal{X}$ and $\mathcal{X}_z$, which establishes a bijective mapping such that $x=0 \iff z=0$ and $\mathcal{X}\backslash\{0\} \iff \mathcal{X}_z\backslash\{0\}$. For any state pair adhering to this transformation, $V_z(Sx) = V_z(z)$ holds, which implies that \eqref{stability.prop.norm1} holds for all $z \in \mathcal{X}_z\setminus\{0\}$ if and only if \eqref{stability.prop.phys1} holds for all $x \in \mathcal{X}\setminus\{0\}$.
			 
			Furthermore, since $\dot{z} = S\dot{x}$, \eqref{eq:iden} yields $\dot{z} = S X_1 Z_0^\dagger Z(x,u)$. By  \eqref{eq:normalized_model_identified}, we have $S X_1 Z_0^\dagger Z(x, u) = Z_1 \Phi_0^\dagger \Phi(z, u)$. Under the controller $u = \psi(z) = \psi(Sx)$, it yields
			$$S X_1 Z_0^\dagger Z(x, \psi(Sx)) = Z_1 \Phi_0^\dagger \Phi(z, \psi(z)),$$  and it follows that
			\begin{equation*}
				\begin{aligned}
					\frac{\partial V_z(Sx)}{\partial x} X_{1} Z_{0}^{\dagger} Z(x,\psi(Sx)) 
					&= \frac{\partial V_z(z)}{\partial z} S X_1 	Z_0^\dagger Z(x, \psi(Sx)) \\
					& = \frac{\partial V_z(z)}{\partial z} Z_1 	\Phi_0^\dagger \Phi(z, \psi(z)).
				\end{aligned}
			\end{equation*}
			Consequently, \eqref{stability.prop.norm2} holds for all $z \in \mathcal{X}_z \setminus \{0\}$ if and only if  \eqref{stability.prop.phys2} holds for all $x \in \mathcal{X} \setminus \{0\}$, which completes the proof. \qedp
			
			Based on the equivalence established in Proposition \ref{prop:equivalence}, the synthesis objective reduces to designing a neural controller $u=\psi(z)$ and a neural Lyapunov function $V_z(z)$ that satisfy \eqref{stability.con.normalized.prop} over $\mathcal{X}_z \setminus \{0\}$. Consequently, the stability conditions \eqref{stability.con.free} are guaranteed to hold across $\mathcal{X} \setminus \{0\}$ by recovering the controller as $\phi(x)=\psi(Sx)$ and the Lyapunov function as $V(x)=V_z(Sx)$.
			
			\begin{rem} \label{rem:asymmetric_bounds}
			For asymmetric state domains characterized by $\mathcal{X} = \{x \in \mathbb{R}^n \mid -a_i \le x_i \le b_i, \; i = 1, \dots, n\}$ with $a_i \neq b_i$, state normalization is achieved via the maximum-absolute-value scaling $s_i := \max(a_i, b_i)$. Consequently, the resulting coordinate transformation $z_i = x_i / s_i$, $i = 1, \dots, n$ maps $\mathcal{X}$ onto a hyper-rectangle embedded within $[-1, 1]^n$, while remaining linear and origin-preserving. 
			\end{rem}

			\subsubsection*{Structural Equilibrium Constraints and Network Pre-training}
			The theoretical prerequisites of the Lyapunov stability conditions in Theorem \ref{thm:control.free} and Proposition \ref{prop:equivalence} necessitate that both the Lyapunov function and the controller vanish at the equilibrium, i.e., $V_z(0) = 0$ and $\psi (0) = 0$. Rather than imposing these equilibrium constraints via soft regularization terms within the loss function, which cannot guarantee exact zero-convergence, we adopt a hard bias-elimination network architecture formulated as:
			\begin{subequations}\label{equilibrium.constraint}
				\begin{align}
					V_z(z) &= \mathcal{N}_V(z) - \mathcal{N}_V(0), \label{eq:V_zero_bias} \\
					\psi(z) &= \mathcal{N}_u(z) - \mathcal{N}_u(0), \label{eq:u_zero_bias}
				\end{align} 
			\end{subequations}
			where $\mathcal{N}_V: \mathbb{R}^n \to \mathbb{R}$ and $\mathcal{N}_u: \mathbb{R}^n \to \mathbb{R}^m$ represent the raw outputs of the underlying multilayer perceptrons (MLPs).  By explicitly subtracting the network evaluations at the origin, $V_z(0) = 0$ and $\psi(0) = 0$ are strictly guaranteed by construction, which is independent of the network weights and biases.
			
			During the joint synthesis, optimizing both networks simultaneously from random initializations might suffers from numerical instability or convergence failure. To overcome this challenge, we execute a pre-training stage wherein $V_z(z)$ is first trained to approximate a canonical quadratic function:
			\[
			V_{\text{target}}(z) = \beta\|z\|_2^2,
			\]
			where $\beta > 0$ a positive scalar. This pre-training step establishes a well-conditioned initial Lyapunov topology that guides the subsequent co-optimization of $\psi(z)$ and $V_z(z)$.

			\subsubsection*{Loss Function Design and Optimization}
			The loss function architecture is designed to explicitly penalize violations of the Lyapunov stability conditions \eqref{stability.con.normalized.prop}. To transform \eqref{stability.con.normalized.prop} into penalty terms within loss function, a positive-definite metric is defined as $\Omega(z) := z^\top z$. 			
			
			Given a training set $\overline{\mathbb{D}}$ consisting of $N$ normalized state samples and a positive margin parameter $\varepsilon > 0$, the penalty term enforcing the positive definiteness of $V_z(z)$ over $\mathcal{X}_z \setminus \{0\}$ is formulated as:
			\begin{equation}\label{loss.V}
				L_{\text{PD}} (V_z) = \frac{1}{N} \sum_{k=1}^{N} \text{ReLU}\left(\varepsilon \cdot \Omega(z_k) - V_z(z_k)\right),
			\end{equation}
			and the penalty term enforcing the negative definiteness of the time derivative $\dot{V}_z(z)$ over $\mathcal{X}_z \setminus \{0\}$ is given by:
			\begin{equation}\label{loss.nd}
				\begin{aligned}
					L_{\text{ND}}(V_z, \psi) =  \frac{1}{N} \sum_{k=1}^{N} 	\text{ReLU} & \left( \frac{\partial V_z(z_k)}{\partial z}  Z_{1} \Phi_{0}^{\dagger} \Phi(z_k,\psi(z_k)) \right. \\
					& \quad \left. + \varepsilon \cdot \Omega(z_k) \right).
				\end{aligned}
			\end{equation}
			Then, the overall loss function for jointly  synthesizing the neural controller $\psi(z)$ and the neural Lyapunov function $V_z(z)$ is defined as:
			\begin{equation}\label{loss.total}
			L_{\text{total}}(V_z, \psi) = \gamma_1 L_{\text{PD}}(V_z) + \gamma_2 L_{\text{ND}}(V_z, \psi),
			\end{equation}
			where $\gamma_1 > 0$ and $\gamma_2 > 0$ are positive penalty hyperparameters.
			
			\begin{rem} \label{rem:epsilon_selection_logic}
				The parameter $\varepsilon$ governs the closed-loop decay rate, and its selection embodies a trade-off between the transient performance and optimization tractability. Specifically, a large margin $\varepsilon$ imposes a restrictive constraint on \eqref{stability.prop.norm2}, which forces the controller $\psi(z)$ to deliver high-gain control actions, thereby potentially inducing gradient explosion and optimization divergence. On the other hand, selecting a conservative margin $\varepsilon$ dampens the closed-loop convergence rate, which results in a sluggish transient response. Furthermore, from a  verification perspective, an overly small $\varepsilon$ severely compromises the numerical robustness of interval-based SMT solvers. When $\varepsilon$ falls below the solver's precision tolerance, it might induce computational bottlenecks. Consequently, $\varepsilon$ should be chosen to reconcile closed-loop performance with the solver's numerical limits.
			\end{rem}
			
			\begin{rem} \label{rem:loss_weights_scaling}
				The selection of $\gamma_1$ and $\gamma_2$ in \eqref{loss.total} is intrinsically determined by the normalization scaling bounds. For $\mathcal{X}$ characterized by small scaling bounds, i.e., $0<b_i \ll 1$, $i = 1, \dots, n$, the coordinate transformation substantially amplifies $\dot{z} = S \dot{x}$. Consequently, $L_{\text{ND}}$ yields disproportionately large gradients that suppress the contribution of $L_{\text{PD}}$ and dominate the loss function. In this case, a higher weight ratio $\gamma_1 / \gamma_2$ is required to balance the respective contributions of $L_{\text{ND}}$ and $L_{\text{PD}}$. Conversely, for $\mathcal{X}$ characterized by large scaling bounds, i.e., $b_i \gg 1$, $i = 1, \dots, n$, coordinate transformation dampens the $z$ dynamics, and thus a lower weight ratio $\gamma_1 / \gamma_2$ is essential to prevent $L_{\text{ND}}$ from being numerically neglected during optimization.
			\end{rem}
			
			\subsection{Counterexample Generation} \label{subsec:counterexample_generation}
			Conventional deep learning paradigms predominantly rely on Empirical Risk Minimization (ERM), which optimizes average performance over randomly drawn samples \cite{vapnik1999overview, goodfellow2016deep}. However, certifying Lyapunov stability demands a deterministic, worst-case guarantee across the entire domain $\mathcal{X}_z$. Instead of blindly sampling from $\mathcal{X}_z$, we shift the focus toward a targeted search for counterexamples. A counterexample $z_{ce}$ is defined as a state vector within $\mathcal{X}_z \setminus \{0\}$ that violates the stability conditions \eqref{stability.con.normalized.prop}. Mathematically, the violation set $\mathcal{X}_{ce}$, which represents the union of the individual violation zones, is formalized as:
			\[
			\mathcal{X}_{ce} = \left\{ z \in \mathcal{X}_z \setminus \{0\} \mid (V_z(z) < \varepsilon \Omega(z)) \lor (\dot{V}_z(z) > -\varepsilon \Omega(z)) \right\}.
			\]
			
			Here, we introduce three computationally efficient approaches for counterexample generation: Projected Gradient Descent (PGD), Generative Adversarial Network (GAN) and Particle Swarm Optimization (PSO).
			
			First, the PGD algorithm is employed as a \emph{gradient-based} search mechanism to uncover the worst-case states that maximize the violation loss \cite{wu2023neural}. Specifically, the search is initialized by generating a seed dataset $\mathbb{D}_{\text{pgd}}$ containing $N_{\text{pgd}}$ samples within $\mathcal{X}_z$. Over an optimization horizon of $p$ steps, each state trajectory in $\mathbb{D}_{\text{pgd}}$ is recursively driven along the worst-case gradient direction with a step size $\kappa$. The updated vector is then projected back onto $\mathcal{X}_z$ via the projection operator $\Pi_{\mathcal{X}_z}$, yielding the following update law for $t = 0, \dots, p-1$:
			\[
			z^{t+1} = \Pi_{\mathcal{X}_z} \left( z^t + \kappa \cdot \text{sign}\left( \left. \frac{\partial L_{\text{total}}(V_z, \psi)}{\partial z} \right|_{z = z^t} \right) \right).
			\]
			If $z^t \in \mathcal{X}_{ce}$, $t = 0, \dots, p-1$, it is classified as a counterexample.
			
			Second, the GAN algorithm is introduced as a \emph{mapping-based} counterexample generator \cite{goodfellow2020generative}. Unlike PGD, which recursively updates state trajectories step-by-step using gradient information, this generative mechanism establishes a direct map from a continuous stochastic input space to candidate counterexamples. Formally, the generator $G_\mu$, parameterized by trainable weights $\mu$, takes a random vector $\tau$ drawn from a standard Gaussian distribution $p_\tau$ as its input, and maps it directly to a candidate counterexample:
			\[
			z_{\text{gan}} = G_\mu(\tau).
			\]
			In conventional GANs, the discriminator is an empirical neural network, which serves as the optimization target for the generator $G_\mu$. In contrast, our framework replaces the discriminator with the deterministic loss function $L_{\text{total}}$. The objective is thus to maximize $L_{\text{total}}$ to uncover counterexamples. If $z_{\text{gan}} \in \mathcal{X}_{ce}$, it is classified as a counterexample. 
			
			Finally, the PSO algorithm is utilized as a \emph{derivative-free} search mechanism to uncover the worst-case states that maximize the violation loss \cite{kennedy1995particle}, thereby bypassing the gradient vanishing or explosion issues caused by saturating activation functions (e.g., $\tanh$) and non-smooth penalty terms (e.g., $\text{ReLU}$). Specifically, the search is initialized by deploying a swarm $\mathbb{D}_{\text{pso}}$ consisting of $N_{\mathrm{pso}}$ particles  within $\mathcal{X}_z$, where each particle $i \in \{1, \dots, N_{\mathrm{pso}}\}$ denotes a state vector $z^{(i), t} \in \mathcal{X}_z$ associated with a search velocity vector $v^{(i), t} \in \mathbb{R}^n$. Over an optimization horizon of $p$ steps, the trajectory of each particle is recursively updated in a derivative-free manner driven by individual and collective historical optima rather than gradient evaluations, yielding the following iterative update law for $t = 0, \dots, p-1$:
			$$v^{(i), t+1} = \omega v^{(i), t} + c_1 r_1 \left( q^{(i), t} - z^{(i), t} \right) + c_2 r_2 \left( g^t - z^{(i), t} \right)$$
			$$z^{(i), t+1} = \Pi_{\mathcal{X}_z} \left( z^{(i), t} + v^{(i), t+1} \right)$$
			where $\omega$ denotes the inertia weight, $c_1$ and $c_2$ represent individual and collective tracking gains, respectively, and $r_1, r_2 \sim \mathcal{U}(0,1)$ are independent random variables introducing stochastic exploration. The vectors $q^{(i), t}$ and $g^t$ denote the individual historical best position and the global best position achieved across $\mathbb{D}_{\text{pso}}$ up to step $t$, respectively, which are extracted by maximizing \begin{align*}
				q^{(i), t} &= \arg\max_{z \in \{z^{(i), 0}, \dots, z^{(i), t}\}} L_{\text{total}}(V_z, \psi)\big|_z, \\
				g^t &= \arg\max_{q \in \{q^{(1), t}, \dots, q^{(N_{\mathrm{pso}}), t}\}} L_{\text{total}}(V_z, \psi)\big|_q.
			\end{align*}
			If $z^{(i), t} \in \mathcal{X}_{ce}$ for $i = 1, \dots, N_{\mathrm{pso}}$ and $t = 0, \dots, p-1$, it is classified as a counterexample. 
			
			Throughout the counterexample search phase, the parameters of both $V_z$ and  $\psi$ remain fixed. The underlying search paradigm features a highly flexible architecture, capable of deploying PGD, GAN, and PSO either individually, in pairwise combinations, or through a tripartite integration. Ultimately, the uncovered counterexamples are appended into $\overline{\mathbb{D}}$ to train $V_z$ and $\psi$.						
			
			\subsection{Stability Formal Verification} \label{subsec:stability_formal_verification}
			Neural network training inherently operates on finite, discrete samples. Even when augmented with adversarial and heuristic search paradigms such as PGD, GAN and PSO, data-driven optimization can only ensure that stability conditions hold at discrete sample points. In contrast, certifying Lyapunov stability demands a continuous guarantee that holds universally across the domain $\mathcal{X}_z$. To bridge this methodological gap, interval-based SMT solvers, such as dReal, are deployed in this paper to execute formal stability verification. To render the verification process computationally tractable, the solver operates under the paradigm of $\delta$-completeness, wherein the target domain $\mathcal{X}_z$ is partitioned via recursive state-space bisection governed by a precision tolerance $\delta > 0$. In fact, the mathematical foundation of the formal verification procedure rests upon interval arithmetic and interval constraint propagation. Specifically, for each generated sub-domain, the solver over-approximates the sub-domain via an interval vector $[z]$. According to the fundamental theorem of interval arithmetic, the solver evaluates the functions on the left-hand side of  \eqref{stability.con.normalized.prop} over $[z]$, yielding a conservative interval enclosure that strictly contains the true image of these functions over that sub-domain. If this interval enclosure satisfies the stability conditions  \eqref{stability.con.normalized.prop}, the associated sub-domain is successfully verified and excluded from further search. Otherwise, the sub-domain is marked as unverified and subjected to further subdivision. For any sub-domain, the subdivision process terminates either upon  verification, or when the interval width falls below $\delta$ without satisfying \eqref{stability.con.normalized.prop}, in which case the solver finds a violation and returns this falsifying sub-domain. Verification across $\mathcal{X}_z$ is successful if all partitioned sub-domains are verified. Consequently, upon successful verification, the Lyapunov conditions \eqref{stability.con.normalized.prop} are rigorously guaranteed to hold continuously across $\mathcal{X}_z$, thereby elevating empirical optimization into a deterministic stability certificate.

			Despite their mathematical rigor, interval-based SMT solvers encounter severe computational bottlenecks  near the origin. Since both $V_z(z)$ and $\psi(z)$ vanish at the equilibrium $z=0$, the conservative interval enclosures of the Lyapunov stability conditions near the origin inherently contain zero, which prevents the solver from certifying \eqref{stability.con.normalized.prop}. To circumvent this verification limitation, we partition $\mathcal{X}_z$ by isolating the origin, thereby shifting the SMT verification domain from $\mathcal{X}_z \setminus \{0\}$ to $\mathcal{X}_{\text{SMT}} := \mathcal{X}_z \setminus \mathcal{B}_{\sigma}(0)$, where $\mathcal{B}_{\sigma}(0) := \{z \in \mathbb{R}^n \mid \|z\|_2 < \sigma\}$ with a small radius $\sigma > 0$. 
			Depending on the verification outcome, the training loop branches into two pathways. On the one hand, if the SMT solver uncovers a stability violation and returns a falsifying interval vector, candidate counterexamples $z_{\text{SMT}}$ are sampled around this interval. Those satisfying $z_{\text{SMT}} \in \mathcal{X}_{ce}$ are classified as counterexamples and aggregated into the training set $\overline{\mathbb{D}}$ to drive the next training loop. On the other hand, if the SMT solver successfully verifies $\mathcal{X}_{\text{SMT}}$, the remaining stability analysis within  $\mathcal{B}_{\sigma}(0)$ is conducted via the local Lyapunov approach detailed below. 
			
			Let $f_{\text{cl}}(z) := Z_{1} 	\Phi_{0}^{\dagger} \Phi(z,\psi(z))$. The first-order Taylor's expansion of the closed-loop system $\dot{z} = f_{\text{cl}}(z)$ around the origin is given by
			\begin{equation} \label{eq:closed_loop_taylor}
				\dot{z} =  A_{\text{cl}} z + g(z),
			\end{equation}
			where $A_{\text{cl}} := \left. \frac{\partial f_{\text{cl}}(z)}{\partial z} \right|_{z=0}$ and $g(z)$ denotes the remainder satisfying $g(0) = 0$ and $\left. \frac{\partial g(z)}{\partial z} \right|_{z=0} = 0$. If the matrix $A_{\text{cl}}$ is not Hurwitz, the candidate controller is rejected, and the algorithm reverts to the training loop to re-synthesize $\psi(z)$ and $V_z(z)$. On the other hand, if the matrix $A_{\text{cl}}$ is Hurwitz, we proceed with the following local stability certification.
			
			Consider a candidate local Lyapunov function $V_{\text{loc}}(z) := z^\top P z$ with $P \in \mathbb S^{n \times n}$. Its time derivative along \eqref{eq:closed_loop_taylor} is given by:
			\begin{equation} \label{eq:v_dot_expansion}
				\dot{V}_{\text{loc}}(z) = \dot{z}^\top P z + z^\top P \dot{z} = z^\top (A_{\text{cl}}^\top P + P A_{\text{cl}}) z + 2 z^\top P g(z).
			\end{equation}
			We search for a matrix $P \succ 0$ via the following SDP:
			\begin{equation} \label{eq:lmi_sdp_opt}
				\max_{P \succ 0, \gamma > 0} \gamma \quad \text{s.t.} \quad P \preceq I, \quad A_{\text{cl}}^\top P + P A_{\text{cl}} + \gamma I \preceq 0.
			\end{equation}
			Since $A$ is Hurwitz, the SDP \eqref{eq:lmi_sdp_opt} is feasible. Let $P$ denote its solution and define $Q := -(A_{\text{cl}}^\top P + P A_{\text{cl}})$. Then, the quadratic term in \eqref{eq:v_dot_expansion} then satisfies 
			\begin{equation} \label{eq:quadratic_term_bound}
				z^\top (A_{\text{cl}}^\top P + P A_{\text{cl}}) z = -z^\top Q z \le -\lambda_{\min}(Q) \|z\|_2^2.
			\end{equation}		
			Furthermore, by the Cauchy-Schwarz inequality along with Taylor's theorem, the higher-order cross-term in \eqref{eq:v_dot_expansion} satisfies
			\begin{equation} \label{eq:higher_order_cross_term_bound}
				\begin{aligned}
					2 z^\top P g(z) &= 2 \sum_{i=1}^n (P_{i,\cdot} z) g_i(z) \\
					&\le 2 \sum_{i=1}^n \|P_{i,\cdot}\|_2 \|z\|_2 \cdot \left( \frac{1}{2} H_i \|z\|_2^2 \right),
				\end{aligned}
			\end{equation}
			where $P_{i,\cdot}$ denotes the $i$-th row of $P$, and $H_i := \max_{z \in \mathcal{B}_{\sigma}(0)} \|\nabla^2 f_{\text{cl}, i}(z)\|_2$ denotes an upper bound on the Hessian spectral norm for the $i$-th component of $f_{\text{cl}}$ over $\mathcal{B}_{\sigma}(0)$. Note that the evaluation of $A_{\text{cl}}$ and $\nabla^2 f_{\text{cl}, i}(z)$ implicitly requires the function library $Z(x,u)$ and the activation functions in $\psi(z)$ to be twice continuously differentiable over $\mathcal{B}_{\sigma}(0)$.
			Defining $K_g := \sum_{i=1}^n \Vert{}P_{i,\cdot}\Vert{}_2 H_i$ and combining \eqref{eq:quadratic_term_bound} with \eqref{eq:higher_order_cross_term_bound}, we arrive at
			\begin{equation} \label{eq:v_dot_final_bound}
				\dot{V}_{\text{loc}}(z) \le  -\left( \lambda_{\min}(Q) - K_g \|z\|_2 \right) \|z\|_2^2.
			\end{equation}
			Consequently, if the exclusion radius $\sigma$ satisfies
			\begin{equation} \label{eq:sigma_threshold_criterion}
				\sigma < \sigma_{\text{threshold}} := \frac{\lambda_{\min}(Q)}{K_g},
			\end{equation}
			$\dot{V}_{\text{loc}}(z) < 0$ is guaranteed to hold for all $z \in \mathcal{B}_{\sigma}(0) \setminus \{0\}$. Together with the SMT verification over $\mathcal{X}_{\text{SMT}}$, this establishes
			a formal stability guarantee across the entire domain $\mathcal{X}_z$. Conversely, if \eqref{eq:sigma_threshold_criterion} is violated, the candidate controller is rejected, and the algorithm reverts to the training loop to re-synthesize $\psi(z)$ and $V_z(z)$.			
			The complete implementation of the proposed counterexample-guided synthesis and formal verification procedure is summarized in Algorithm 1.
			
			\begin{algorithm}[t]
				\caption{\normalsize Synthesis and SMT-Based Verification of Neural Controllers}
				\label{alg:synthesis}
				\begin{algorithmic}[1]
					\STATE \textbf{Input:} Identified model \eqref{eq:normalized_model_identified}, domain of interest $\mathcal{X}_z$, pre-training parameter $\beta$, positive margin $\varepsilon$, penalty hyperparameters $\gamma_1$ and $\gamma_2$, exclusion radius $\sigma$, and precision tolerance $\delta$.
					\STATE \textbf{Output:} Verified neural controller $\psi(z)$, verified neural Lyapunov function $V_z(z)$.
					\STATE \textbf{Initialization:} Construct $\psi$ and $V_z$ with $\psi(0)=0$ and $V_z(0)=0$.
					\STATE \textbf{Pre-training:} Pre-train $V_z$ to approximate $V_{\text{target}}$.
					\STATE \textbf{Seed Sampling:} Sample initial dataset $\overline{\mathbb{D}} \subset \mathcal{X}_z \setminus \{0\}$.
					\WHILE{\textbf{true}}
					\FOR{$i = 1, 2, \dots$}
					\STATE Perform hybrid sampling targeting $\mathcal{X}_{\text{SMT}}$ and boundary $\partial\mathcal{B}_\sigma(0)$.
					\STATE Search counterexamples $\mathbb{D}_{\text{ce}}$ via heuristic search paradigms.
					\IF{$\mathbb{D}_{\text{ce}} = \emptyset$}
					\STATE \textbf{break}
					\ENDIF
					\STATE Augment training dataset: $\overline{\mathbb{D}} \leftarrow \overline{\mathbb{D}} \cup \mathbb{D}_{\text{ce}}$.
					\FOR{$\text{epoch} = 1, \dots, N_{\text{epoch}}$}
					\STATE Update $\psi$ and $V_z$ by minimizing $\mathcal{L}_{\text{total}}(V_z, \psi)$ over $\overline{\mathbb{D}}$.
					\ENDFOR
					\ENDFOR
					
					\STATE Invoke SMT solver to verify \eqref{stability.con.normalized.prop} over $\mathcal{X}_{\text{SMT}}$.
					\IF{SMT verification fails}
					\STATE Extract counterexamples $\mathbb{D}_{\text{SMT}}$ from the returned falsifying interval.
					\STATE Augment training dataset: $\overline{\mathbb{D}} \leftarrow \overline{\mathbb{D}} \cup \mathbb{D}_{\text{SMT}}$.
					\ELSE
					\STATE Evaluate the Jacobian matrix $A_{\text{cl}}$.
					\IF{$A_{\text{cl}}$ is Hurwitz}
					\STATE Solve $P$, estimate $H_i$ and compute $\sigma_{\text{threshold}}$.
					\IF{$\sigma < \sigma_{\text{threshold}}$}
					\RETURN $\psi, V_z$.
					\ENDIF
					\ENDIF
					\ENDIF
					\ENDWHILE
				\end{algorithmic}
			\end{algorithm}

			\begin{rem} \label{rem:sdp_objective_and_sampling}
				It is worth noting that the SDP \eqref{eq:lmi_sdp_opt} inherently optimizes  $\sigma_{\text{threshold}}$. Specifically, the constraint $P \preceq I$ enforces $\|P\|_2 \le 1$ that suppresses $K_g$, while the optimization objective $\max \gamma$ maximizes $\lambda_{\min}(Q)$, thereby enlarging $\sigma_{\text{threshold}}$ to facilitate the satisfaction of \eqref{eq:sigma_threshold_criterion}. Furthermore, to compute $H_i$ ($i = 1, \dots, n$), a set of state samples $\{z_k\}_{k=1}^{\bar{N}}$ is densely drawn within $\mathcal{B}_{\sigma}(0)$. The Hessian spectral norm $\|\nabla^2 f_{\text{cl}, i}(z_k)\|_2$ at each sample point is evaluated using PyTorch, and $H_i$ is subsequently determined by taking the maximum value across all evaluated samples.
			\end{rem}
			
			\begin{rem}
			Unlike conventional deep learning paradigms reliant on static datasets, the proposed synthesis framework dynamically expands the training set $\overline{\mathbb{D}}$ through three sources: (i) initial seed samples drawn across $\mathcal{X}_z \setminus \{0\}$, (ii) targeted counterexamples generated via PGD, GAN, and PSO algorithms, and (iii) formal counterexamples extracted by the SMT solver during the verification phase.
			\end{rem}	
		
			\begin{rem}
			Verifying stability conditions near the boundary of $\mathcal{B}_{\sigma}(0)$ poses computational challenges for interval-based SMT solvers, rendering this transition zone prone to verification failures without targeted optimization. To mitigate this issue, we implement a hybrid sampling strategy specifically designed to initialize the PGD search trajectories, train the GAN mapping, and position the PSO swarm particles. Specifically, the total sample allocation is partitioned into a global component drawn from $\mathcal{X}_z \setminus \{0\}$ to ensure extensive baseline coverage, and a local component densely clustered along the boundary $\partial\mathcal{B}_\sigma(0)$. This targeted sampling strategy enhances the search efficacy of the PGD, GAN, and PSO algorithms in uncovering potential counterexamples along this critical interface, thereby facilitating the subsequent formal stability certification.
			\end{rem}

    		We demonstrate the proposed framework via a numerical example.

    		\begin{example} \label{ex:lorenz} 			
    			Consider the dynamics of a controlled Lorenz chaotic system \cite{vincent1991control}:
    			\begin{equation} \label{eq:lorenz_dynamics}
    				\begin{aligned}
    					\dot{x}_1 &= a (x_2 - x_1), \\
    					\dot{x}_2 &= r x_1 - x_2 - x_1 x_3 + u, \\
    					\dot{x}_3 &= x_1 x_2 - \eta x_3,
    				\end{aligned}
    			\end{equation}
    			where $x := [x_1, x_2, x_3]^\top \in \mathbb{R}^3$ denotes the  state vector, $u \in \mathbb{R}$ represents the control input, and  $a, r, \eta > 0$ are physical system parameters.     			
    			The control objective is to stabilize the chaotic system at the origin. Note that the origin is an unstable equilibrium of the uncontrolled system for $r > 1$.
    			
    			
    			Let $\mathcal{X} = \{x \in \mathbb{R}^3 \mid -b_i \le x_i \le b_i, \; i = 1, 2, 3\}$ denote the target stabilization domain, and $S = \operatorname{diag}(\frac{1}{b_1}, \frac{1}{b_2}, \frac{1}{b_3})$ be the state scaling matrix. By expressing \eqref{eq:lorenz_dynamics} in terms of the normalized state $z = Sx$, the transformed dynamics governing $z$ is formulated as:
    			\begin{equation}
    				\begin{aligned}
    					\dot{z}_1 &= a \left( \frac{b_2}{b_1} z_2 - z_1 \right), \\
    					\dot{z}_2 &= \frac{r b_1}{b_2} z_1 - z_2 - \frac{b_1 b_3}{b_2} z_1 z_3 + \frac{1}{b_2} u, \\
    					\dot{z}_3 &= \frac{b_1 b_2}{b_3} z_1 z_2 - \eta z_3.
    				\end{aligned}
    			\end{equation}  
    			Accordingly, we construct the normalized function library as $\Phi(z, u) = [z_1, z_2, z_3, u, z_1 z_3, z_1 z_2]^\top$.
    			
    			As the first step, we collect the dataset $\mathbb{D}$ for identification purpose from the system \eqref{eq:lorenz_dynamics} setting the parameters as $a = 10$, $r = 28$, and $\eta = \frac{8}{3}$. The domain bounds defining $\mathcal{X}$ are specified as $b_1 = b_2 = b_3= 1$. Here, we collect $T = 300$ samples by uniformly sampling the control input $u \in [-20, 20]$ and the state $x \in \mathcal{X}$. Based on the collected dataset $\mathbb{D}$, the identified normalized model \eqref{eq:normalized_model_identified} is constructed.
    			
    			To prepare for the joint synthesis and formal stability verification, the neural controller $\psi(z)$ and neural Lyapunov function $V_z(z)$ are configured as feedforward neural networks with one hidden layer of $80$ neurons and $\tanh$ activation function. To satisfy the equilibrium stability prerequisites, the bias-elimination architecture is implemented to strictly enforce $V_z(0) = 0$ and $\psi(0) = 0$. Prior to the training phase, a warm-up stage is executed wherein $V_z(z)$ is pre-trained to approximate $V_{\text{target}}(z) = 3 \|z\|_2^2$. Regarding the loss function, we set the convergence margin $\epsilon = 0.01$, and the hyperparameters $\gamma_1 = 10$ and $\gamma_2 = 10$. The network parameters are optimized via the Adam optimizer with a learning rate of $1 \times 10^{-3}$.
    			
    			The training phase begins with the construction of the training dataset $\overline{\mathbb{D}}$. Initially, $\overline{\mathbb{D}}$ is populated with $50$ batches of samples, each with a batch size of $300$, generated via uniform state sampling across the domain $\mathcal{X}_z = [-1, 1]^3$. Let the exclusion ball be of radius $\sigma = 0.1$, and thus the SMT verification domain is $\mathcal{X}_{\text{SMT}} = \{z \in \mathbb{R}^3 \mid z \in [-1, 1]^3 \setminus \mathcal{B}_{\sigma}(0)\}$. Within each training loop, a counterexample search is conducted by deploying the PGD algorithm over $p = 60$ steps with a step size of $\kappa = 0.04$. A hybrid sampling strategy is applied to initialize the PGD candidate trajectories: $70\%$ of the initial states are drawn uniformly across $\mathcal{X}_z$, whereas the remaining $30\%$ are densely clustered along the boundary $\partial\mathcal{B}_{\sigma}(0)$. The uncovered counterexamples are aggregated and dynamically appended to $\overline{\mathbb{D}}$. For formal stability verification, the SMT solver dReal is deployed to validate the Lyapunov conditions over $\mathcal{X}_{\text{SMT}}$ with a precision tolerance of $\delta = 1 \times 10^{-4}$.
    			
    			In the numerical evaluation, dReal successfully returns a formal verification certificate, confirming that the Lyapunov conditions \eqref{stability.con.normalized.prop} hold universally across $\mathcal{X}_{\text{SMT}}$. Furthermore, regarding the local domain $\mathcal{B}_\sigma(0)$,  the maximum real part of the eigenvalues of $A_{\text{cl}}$ is $-2.6667$, confirming its Hurwitz property. Solving the SDP \eqref{eq:lmi_sdp_opt} and evaluating the Hessian bounds  $H_i$ over $\mathcal{B}_{\sigma}(0)$ yield a threshold $\sigma_{\text{threshold}} = 2.2526$ that satisfies $\sigma < \sigma_{\text{threshold}}$, thereby ensuring asymptotic stability over the entire domain $\mathcal{X}_z$. By Proposition \ref{prop:equivalence}, the synthesized neural controller $u =  \psi(Sx)$ formally guarantees the asymptotic stability of the Lorenz system at the origin. 
    			
    			To assess the control performance, we examine the 3D phase portraits of both the open-loop and closed-loop Lorenz systems, initialized from four distinct states in $\mathcal{X}$.
    			As depicted in Figure~\ref{fig:lorenz_3d_open}, the open-loop state trajectories exhibit typical chaotic motion, persistently orbiting within the chaotic attractor without converging to the origin. In contrast, under the synthesized neural controller, all closed-loop state trajectories originating from these initial points are driven to the origin.
    			
    			\begin{figure}[ht!]
    				\centering
    				\begin{minipage}[b]{0.8\linewidth}
    					\centering
    					\includegraphics[width=\textwidth]{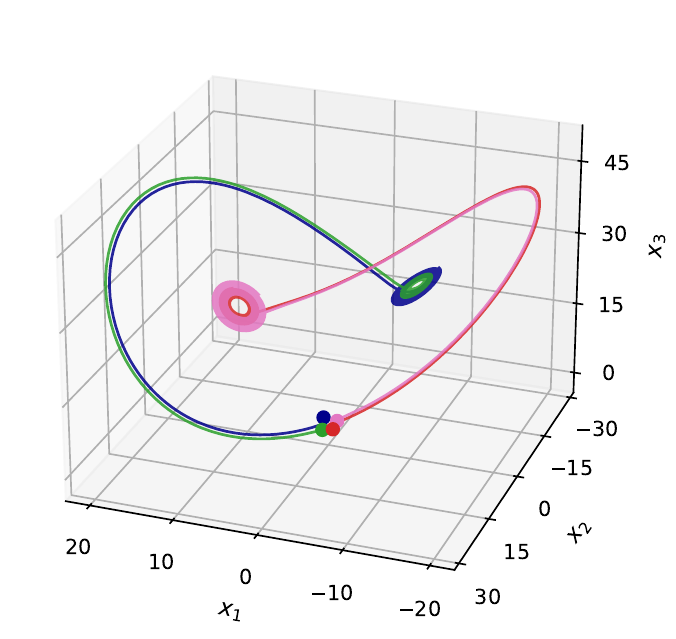}
    					\vspace{2pt}
    					{\scriptsize (a) Open-loop trajectories.}
    					\label{fig:lorenz_3d_open}
    				\end{minipage}
    				\par\vspace{8pt} 
    				\begin{minipage}[b]{0.8\linewidth}
    					\centering
    					\includegraphics[width=\textwidth]{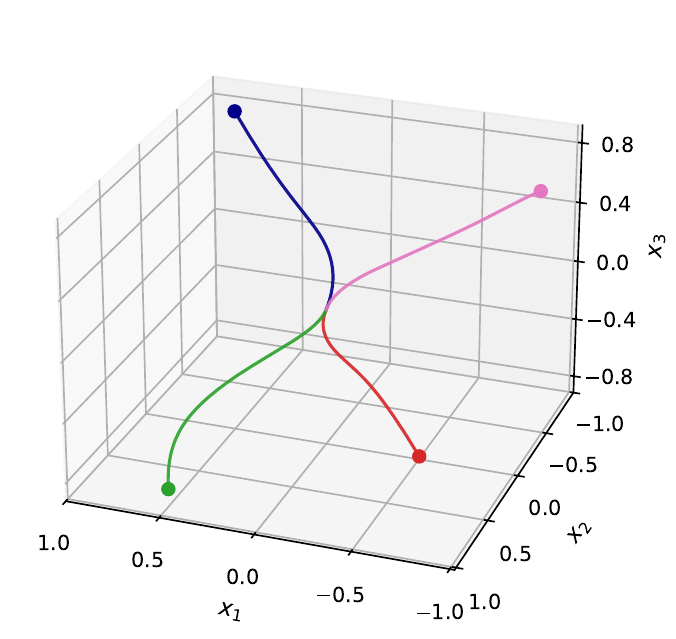}
    					\vspace{2pt}
    					{\scriptsize (b) Closed-loop trajectories.}
    					\label{fig:lorenz_3d_closed}
    				\end{minipage}
    				\caption{3D phase portraits of the open-loop and closed-loop Lorenz systems, initialized from four distinct initial conditions in $\mathcal{X}$.}
    				\label{fig:lorenz_3d_comparison}
    			\end{figure}
    			   		
    		\end{example}

			\section{Input Constraints} \label{sec:input_constraints}
			
			In practical physical systems, actuators are invariably subject to saturation limits. To ensure safe controller deployment, the set of admissible control inputs $\mathcal{U} \subset \mathbb{R}^m$ is formulated as   
			\begin{equation}
				\mathcal{U} := \{u \in \mathbb{R}^m \mid -u_{i,\min} \le u_i \le u_{i,\max}, \; i = 1, \dots, m\},
			\end{equation}		
			where $u_{i,\min} > 0$ and $u_{i,\max} > 0$ denote the lower and upper bounds on $u_i$, respectively.
			
			To enforce input constraints, one intuitive approach is to integrate the input admissibility condition, i.e., $\phi(x) \in \mathcal{U}$, into the stability formulation \eqref{stability.con.free}, penalize its violations in the loss function during training, and formally verify it using SMT solvers. However, incorporating input constraints into SMT verification substantially increases computational burden, while penalizing input violations distracts the neural network's  optimization focus on fulfilling the Lyapunov stability conditions. To circumvent these limitations, we incorporate a hard saturation architecture into the neural feedback design, which guarantees input constraints by construction. Specifically, we first construct a normalized controller $\bar{u} = \bar \psi(z)$ parameterized as:  
			\begin{equation}
				\bar \psi(z) := \tanh \big( \mathcal{N}_u(z) - \mathcal{N}_u(0) \big).
			\end{equation}
			By passing the control signal through the $\tanh$ function, $\bar{u}$ is confined to the unit hypercube $[-1, 1]^m$, thereby serving as a normalized control input. The physical control input $u$ is subsequently recovered via the linear scaling transformation:
			\begin{equation}
				u = S_u \bar{u},
			\end{equation}
			where $S_u := \text{diag}(b_{u,1}, b_{u,2}, \dots, b_{u,m})$ denotes the input scaling matrix with $b_{u,i} := \min(u_{i,\min}, u_{i,\max})$, which guarantees $u \in \mathcal{U}$ by construction.
		
			Analogous to \eqref{sys.d3}, the $z$-dynamics under the control input $\bar{u}$ are expressed as
			\begin{equation}
				\label{sys.d4}
				\dot z = \bar M \bar{\Phi}(z, \bar{u})
			\end{equation} 	
			where $\bar \Phi(z,\bar u) := T_{\Phi} \Phi(z, u)$ represents the input-normalized function library, and $ T_{\Phi} \in \mathbb{R}^{s \times s}$ is a non-singular diagonal matrix. 					
			The normalized data matrix $\bar{\Phi}_0 \in \mathbb{R}^{s \times T}$ is organized from the dataset $\mathbb{D}$ as:
			\begin{equation}
				\begin{aligned}					
					\bar \Phi_0 :=& \left[ \begin{matrix} \bar \Phi(S x(t_0),S_u^{-1} u(t_0)) & \bar \Phi(S x(t_1),S_u^{-1} u(t_1)) \end{matrix} \right.\nonumber
					\\&\quad \left. \begin{matrix} \cdots & \bar \Phi(S x(t_{T-1}),S_u^{-1} u(t_{T-1})) \end{matrix}  \right].
				\end{aligned}
			\end{equation}
			These matrices satisfy $Z_1 = \bar M \bar \Phi_0$, and $\bar \Phi_0$ preserves the full row rank property of $\Phi_0$. The matrix $\bar M$ is identified as $\bar M = Z_1 \bar \Phi_0^\dagger$,
			where $\bar \Phi_0^\dagger \in \mathbb{R}^{T \times s}$ denotes the right inverse of $\bar \Phi_0$. Hence, the identified model under input normalization is expressed as
			\begin{equation} \label{eq:normalized_model_identified2}
				\dot{z} = Z_1 \bar \Phi_0^\dagger \bar \Phi(z,\bar u).
			\end{equation}
			As formalized below, the stability of \eqref{eq:normalized_model_identified2} is equivalent to that of the system \eqref{sys.d3}.
						
			\begin{proposition} \label{prop:input_norm_equivalence}
				Consider the system \eqref{sys.d3} and the input-normalized system \eqref{eq:normalized_model_identified2} under the input transformation $u = S_u \bar{u}$. Suppose there exist neural networks $\bar{\psi}$ and $V_z$ satisfying $\bar{\psi}(0)=0$, $V_z(0)=0$, and $V_z(z)>0$, $\forall z \in \mathcal{X}_z \setminus \{0\}$. Then, for each $z \in \mathcal{X}_z \setminus \{0\}$, the condition
				\begin{equation} \label{eq:stability_cond_normalized}
					\frac{\partial V_z(z)}{\partial z} Z_1 \bar{\Phi}_0^{\dagger} \bar{\Phi}(z, \bar{\psi}(z)) < 0
				\end{equation}
				holds if and only if 
				\begin{equation} \label{eq:stability_cond_original}
					\frac{\partial V_z(z)}{\partial z} Z_1 \Phi_0^{\dagger} \Phi(z, S_u \bar{\psi}(z)) < 0
				\end{equation}
				holds.
			\end{proposition}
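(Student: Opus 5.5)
The plan is to mirror the argument of Proposition~\ref{prop:equivalence}: show that the two drift terms coincide pointwise once the input is linked by $u = S_u\bar u$. The sign of the Lie derivative then transfers directly. The key identity to establish is
\begin{equation*}
Z_1 \bar{\Phi}_0^{\dagger} \bar{\Phi}(z,\bar u) = Z_1 \Phi_0^{\dagger} \Phi(z, S_u \bar u)\quad \text{for all } (z,\bar u).
\end{equation*}
Both sides are the identified vector field $\dot z$ of the same normalized system, written in two parametrizations.

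To prove the identity, I would first pin down the identified matrices. By Assumption~\ref{ass:Z0} and the nonsingularity of $T_Z$, the matrix $\Phi_0$ has full row rank. Moreover, $\bar\Phi_0 = T_\Phi \Phi_0$ column by column, since the $k$-th column is $\bar\Phi(Sx(t_k), S_u^{-1}u(t_k)) = T_\Phi \Phi(Sx(t_k), u(t_k))$. Hence $\bar\Phi_0$ also has full row rank.

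Next, I use that $Z_1 = M\Phi_0$ with $M$ the true normalized system matrix. Then $Z_1\Phi_0^\dagger = M\Phi_0\Phi_0^\dagger = M$, because $\Phi_0\Phi_0^\dagger = I_s$ for any right inverse. This is exactly the step that uses full row rank: the identification is exact and independent of which right inverse is chosen. Similarly, $Z_1\bar\Phi_0^\dagger = M\Phi_0\bar\Phi_0^\dagger$. Writing $\Phi_0 = T_\Phi^{-1}\bar\Phi_0$ gives $Z_1\bar\Phi_0^\dagger = M T_\Phi^{-1}\bar\Phi_0\bar\Phi_0^\dagger = M T_\Phi^{-1}$, so $\bar M = MT_\Phi^{-1}$.

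Consequently, $Z_1\bar\Phi_0^\dagger\bar\Phi(z,\bar u) = MT_\Phi^{-1}T_\Phi\Phi(z,S_u\bar u) = M\Phi(z,S_u\bar u) = Z_1\Phi_0^\dagger\Phi(z,S_u\bar u)$. This proves the identity.

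Finally, I would substitute $\bar u = \bar\psi(z)$ and left-multiply by $\partial V_z(z)/\partial z$. The left-hand sides of \eqref{eq:stability_cond_normalized} and \eqref{eq:stability_cond_original} are then equal for every $z\in\mathcal{X}_z\setminus\{0\}$, so one is negative if and only if the other is. The hypotheses $\bar\psi(0)=0$, $V_z(0)=0$ and positivity of $V_z$ play no role in the equivalence itself; they only ensure that the conclusion is a genuine Lyapunov certificate.

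The main obstacle is the sole subtle point: justifying $\bar\Phi(z,\bar u) = T_\Phi\Phi(z,S_u\bar u)$ as a function identity. This is the intended meaning of the definition $\bar\Phi(z,\bar u) := T_\Phi\Phi(z,u)$ with $u = S_u\bar u$, where the diagonal scaling $T_\Phi$ absorbs the input scaling in each library entry. That identity is what makes the data matrices relate by $\bar\Phi_0 = T_\Phi\Phi_0$. If instead one only had $Z_1 = \bar M\bar\Phi_0$ without the explicit relation, I would argue uniqueness: any $\bar M$ solving this must equal $Z_1\bar\Phi_0^\dagger$ by full row rank, and comparing with $M$ yields the same result.
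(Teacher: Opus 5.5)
Your argument is correct. Its skeleton matches the paper's: show the two drift terms agree pointwise, then left-multiply by $\frac{\partial V_z(z)}{\partial z}$. The key step, however, is different.

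The paper never uses the exact data relation $Z_1 = M\Phi_0$. It invokes the pseudoinverse identity $\bar\Phi_0^\dagger = (T_\Phi\Phi_0)^\dagger = \Phi_0^\dagger T_\Phi^{-1}$ and cancels $T_\Phi^{-1}T_\Phi$. This yields the vector identity $\bar\Phi_0^\dagger\bar\Phi(z,\bar\psi(z)) = \Phi_0^\dagger\Phi(z,S_u\bar\psi(z))$ before $Z_1$ is even applied. You instead collapse $Z_1\Phi_0^\dagger = M$ and $Z_1\bar\Phi_0^\dagger = MT_\Phi^{-1}$, using only $\Phi_0\Phi_0^\dagger = I_s$ and $\bar\Phi_0\bar\Phi_0^\dagger = I_s$.

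\emph{What your route buys.} It is independent of the choice of right inverse. The paper's identity holds for the Moore--Penrose inverse $\Phi_0^\top(\Phi_0\Phi_0^\top)^{-1}$, or for any compatible pair of right inverses. It can fail for two arbitrarily chosen right inverses when $T > s$, since they may differ by a matrix whose columns lie in the null space of $\Phi_0$. Your products $Z_1\Phi_0^\dagger$ and $Z_1\bar\Phi_0^\dagger$ are the same for every right inverse.

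\emph{What the paper's route buys.} It is independent of noise-free data: it is a purely algebraic statement about the data matrices, valid for any $Z_1$. It also gives $\bar\Phi_0^\dagger\bar\Phi = \Phi_0^\dagger\Phi$ on its own. That is what would be needed to carry the input normalization over to the robust conditions of Section~\ref{sec:robustness}. There $Z_1 \neq M\Phi_0$, and the term $\frac{1}{\alpha}\Phi^\top(\Phi_0^\dagger)^\top\Phi_0^\dagger\Phi$ involves $\Phi_0^\dagger$ with no $Z_1$ in front, so your argument would not transfer.

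Within the noise-free setting of this proposition, where $Z_1 = M\Phi_0$ is available, both proofs are complete.
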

			
			\emph{Proof.} 
			Since $\bar \Phi(z,\bar u) = T_{\Phi} \Phi(z, u)$ and $T_{\Phi}$ is a non-singular diagonal matrix, the constructed data matrices satisfy $\bar{\Phi}_0 = T_{\Phi} \Phi_0$, which yields
			\begin{equation*}
				\bar{\Phi}_0^{\dagger} = (T_{\Phi} \Phi_0)^{\dagger} = \Phi_0^{\dagger} T_{\Phi}^{-1}.
			\end{equation*}
			Furthermore, under the controller $\bar{u} = \bar{\psi}(z)$ with the input transformation $u = S_u \bar{u}$, we have $\bar{\Phi}(z, \bar{\psi}(z)) = T_{\Phi} \Phi(z, S_u \bar{\psi}(z))$. Substituting these relations yields
			\begin{equation*}
				\begin{aligned}
					Z_1 \bar{\Phi}_0^{\dagger} \bar{\Phi}(z, \bar{\psi}(z)) 
					&= Z_1 \left( \Phi_0^{\dagger} T_{\Phi}^{-1} \right) \left( T_{\Phi} \Phi(z, S_u \bar{\psi}(z)) \right) \\
					&= Z_1 \Phi_0^{\dagger} \left( T_{\Phi}^{-1} T_{\Phi} \right) \Phi(z, S_u \bar{\psi}(z)) \\
					&= Z_1 \Phi_0^{\dagger} \Phi(z, S_u \bar{\psi}(z)).
				\end{aligned}
			\end{equation*}
			Left-multiplying both sides by $\frac{\partial V_z(z)}{\partial z}$, we obtain
			\begin{equation*}
				\frac{\partial V_z(z)}{\partial z} Z_1 \bar{\Phi}_0^{\dagger} \bar{\Phi}(z, \bar{\psi}(z)) = \frac{\partial V_z(z)}{\partial z} Z_1 \Phi_0^{\dagger} \Phi(z, S_u \bar{\psi}(z)),
			\end{equation*}
			which holds pointwise for all $z \in \mathcal{X}_z \setminus \{0\}$. Consequently, condition \eqref{eq:stability_cond_normalized} holds over $\mathcal{X}_z \setminus \{0\}$ if and only if condition \eqref{eq:stability_cond_original} holds over $\mathcal{X}_z \setminus \{0\}$, which completes the proof. \qedp
			
			By virtue of the joint equivalences established in Propositions \ref{prop:equivalence} and \ref{prop:input_norm_equivalence}, the synthesis objective under input constraints reduces to designing a neural controller $\bar{u} = \bar{\psi}(z)$ and a neural Lyapunov function $V_z(z)$ that satisfy \eqref{stability.prop.norm1} and \eqref{eq:stability_cond_normalized}. Accordingly, the loss term $L_{\text{ND}}$ in \eqref{loss.nd} is adapted to account for the input-normalized dynamics:
			\begin{equation}\label{loss.ND_sat}
				\begin{aligned}
					L_{\text{ND}}(V_z, \bar{\psi}) = \frac{1}{N} \sum_{k=1}^{N} \text{ReLU} & \left( \frac{\partial V_z(z_k)}{\partial z} Z_1 \bar{\Phi}_0^{\dagger} \bar{\Phi}(z_k, \bar{\psi}(z_k)) \right. \\
					& \quad \left. + \varepsilon \cdot \Omega(z_k) \right).
				\end{aligned}
			\end{equation}
			 Consequently, the stability conditions \eqref{stability.con.free} are guaranteed to hold across $\mathcal{X} \setminus \{0\}$ by recovering the physical controller as $\phi(x) = S_u \bar{\psi}(Sx)$ and establishing the physical Lyapunov function as $V(x) = V_z(Sx)$.						
						
			\begin{rem}
			Appending a $\tanh$ function to the raw MLP output provides an intuitive yet effective strategy for both enforcing input constraints and achieving input normalization. From an optimization perspective, in the absence of input normalization, the gradient updates of the neural controller tend to dominate those of the neural Lyapunov function when $b_{u,i} \gg 1$, whereas the converse holds when $b_{u,i} \ll 1$. Hence, input normalization plays a role in balancing the optimization of both networks.	
			\end{rem}							
			
			\begin{example}  \label{ex:maglev}  			
				Consider the dynamics of a magnetic levitation system \cite{ortega2001putting}:
				\begin{equation} \label{eq:maglev_dynamics}
					\begin{aligned}
						\dot{\lambda} &= -\frac{R}{k} \lambda h + u, \\
						\dot{h} &= \frac{1}{m} w, \\
						\dot{w} &= -\frac{1}{2k} \lambda^2 + mg, 
					\end{aligned}
				\end{equation}
				where $\lambda$ is the coil flux, $h$ is the absolute air gap, $w$ is the momentum of the levitated iron ball, and $u$ is the control voltage applied to the electromagnet coil. Let $x := [\lambda, h, w]^\top$ be the state vector.    		    				
				The control objective is to stabilize the iron ball at a desired height $h_*$. The corresponding steady-state equilibrium point is given by $x_* = [\lambda_*, h_*, 0]^\top$ and $u_* = \frac{R h_* \lambda_*}{k}$, where $\lambda_* = \sqrt{2kmg}$. To shift the equilibrium to the origin, we introduce the error state and input variables as $\tilde{x} := x - x_*$ and $\tilde{u} := u - u_*$, respectively. Note that  $\tilde{x} = 0$ is an unstable equilibrium of the	uncontrolled system.

				Let $\mathcal{X} = \{\tilde{x} \in \mathbb{R}^3 \mid -b_i \le \tilde{x}_i \le b_i, \; i = 1, 2, 3\}$ denote the target stabilization domain, and $S = \text{diag}(\frac{1}{b_1}, \frac{1}{b_2}, \frac{1}{b_3})$ be the state scaling matrix. Furthermore, let $\mathcal{U} = \{\tilde{u} \in \mathbb{R} \mid -u_{\min} \le \tilde{u} \le u_{\max}\}$ be the set of admissible control input, and $S_u = b_u$ be the input scaling factor with $b_u = \min(u_{\min}, u_{\max})$. By expressing \eqref{eq:maglev_dynamics} in terms of the shifted, normalized state $z = S\tilde{x}$ and control input $\bar{u} = S_u^{-1} \tilde{u}$, the transformed dynamics governing $z$ are formulated as:
				\begin{equation}
					\begin{aligned}
						\dot{z}_1 &= -\frac{R}{k} \left(h_* z_1  +  \frac{\lambda_* b_2}{b_1} z_2 + b_2 z_1 z_2    \right) + \frac{b_u}{b_1} \bar{u}, \\
						\dot{z}_2 &= \frac{b_3}{b_2 m} z_3, \\
						\dot{z}_3 &= -\frac{1}{2 k b_3} \left( 2 \lambda_* b_1 z_1 + b_1^2 z_1^2   \right).
					\end{aligned}
				\end{equation}    
				Accordingly, we construct the normalized function library as $\Phi(z, \bar{u}) = [z_1, z_2, z_3, \bar{u}, z_1 z_2, z_1^2]^\top$.
				
				As the first step, we collect the dataset $\mathbb D$ for identification purpose from the system \eqref{eq:maglev_dynamics} setting the parameters as $R = 2.52$, $m = 0.0844$, $g = 9.81$, $k = 6.4042 \times 10^{-5}$, and the target equilibrium height as $h_* = 0.007$. The domain bounds defining $\mathcal{X}$ are set to $b_1 = 0.0015$, $b_2 = 0.0015$, and $b_3 = 0.008$, and the input bounds defining $\mathcal{U}$ are set to $u_{\max} = u_{\min} = 2$. Given the small magnitude of the physical bounds $b_i \ll 1$ ($i = 1, 2, 3$), state normalization is essential; otherwise, training directly on the unscaled error state $\tilde{x}$ might induce numerical issues. Here, we collect $T = 50$ samples by uniformly sampling the control input $u$ and state $x$ such that $\tilde{u} \in \mathcal{U}$ and $\tilde{x} \in \mathcal{X}$, respectively.     			
				Based on the collected dataset $\mathbb D$, the identified model \eqref{eq:normalized_model_identified2} is constructed. 
				
				For network configuration, both $V_z(z)$ and $\bar{\psi}(z)$ are parameterized using feedforward neural networks with one hidden layer of $80$ neurons and $\tanh$ activation function. To satisfy the equilibrium stability prerequisites, the bias-elimination architecture \eqref{equilibrium.constraint} is implemented to guarantee $V_z (0) = 0$ and $\bar{\psi}(0) = 0$, with an additional $\tanh$ function enforcing the input constraint $\tilde{u} \in \mathcal{U}$. Prior to the training phase, a warm-up stage is executed wherein $V_z(z)$ is pre-trained to approximate $V_{\text{target}}(z) = 6 \Vert{}z\Vert{}_2^2$. Regarding the loss function, we set the convergence margin $\epsilon = 0.01$, and the hyperparameters $\gamma_1 = 100$ and $\gamma_2 = 25$ to balance the penalty terms. The network parameters are updated via the Adam optimize, and $V_z(z)$ is trained with a learning rate of $1 \times 10^{-3}$, whereas $\psi(z)$ is trained with a learning rate of $5 \times 10^{-4}$ together with a weight decay of $1 \times 10^{-3}$.    			    			
				
				The training phase begins with the construction of the training dataset $\overline{\mathbb{D}}$. Initially, $\overline{\mathbb{D}}$ is populated with $50$ batches of samples, each with a batch size of $300$, generated via uniform state sampling across the domain $\mathcal{X}_z$. Let the exclusion ball be of radius $\sigma = 0.05$, and thus the SMT verification domain is $\mathcal{X}_{\text{SMT}} := \{z \in \mathbb{R}^3 \mid z \in [-1, 1]^3 \setminus \mathcal{B}_{\sigma}(0)\}$. Within each training loop, a synergistic counterexample search is executed by deploying both PGD and PSO. Specifically, the gradient-based PGD search initializes a pool of $100$ batches of candidate samples, which are recursively updated over $p = 60$ steps with a step size of $\kappa = 0.04$. Furthermore, to bypass saturation-induced vanishing gradients, the derivative-free PSO search deploys a swarm of $N_{\mathrm{pso}} = 1000$ particles over $p = 60$ steps, configured with $\omega = 0.5$ and $c_1 = c_2 = 1.5$. Both the PGD trajectories and the PSO swarms adopt the hybrid sampling configuration detailed in Example \ref{ex:lorenz}. The  counterexamples uncovered by both search algorithms are aggregated and dynamically appended to $\overline{\mathbb{D}}$.    	In the formal stability verification phase, the SMT solver dReal is deployed to validate the Lyapunov conditions \eqref{stability.prop.norm1} and \eqref{eq:stability_cond_normalized}  over $\mathcal{X}_{\text{SMT}}$ with a precision tolerance of $\delta = 1 \times 10^{-4}$. 
				
				In the numerical evaluation, dReal successfully returns a formal verification certificate, confirming that the Lyapunov conditions \eqref{stability.prop.norm1} and \eqref{eq:stability_cond_normalized} hold universally across $\mathcal{X}_{\text{SMT}}$. Furthermore, regarding the local domain $\mathcal{B}_\sigma(0)$, the maximum real part of the eigenvalues of $A_{\text{cl}}$ is $-8.0824$, confirming its Hurwitz property. Solving the SDP \eqref{eq:lmi_sdp_opt} and evaluating the Hessian bounds  $H_i$ over $\mathcal{B}_{\sigma}(0)$ yield a threshold $\sigma_{\text{threshold}} = 0.3327$ that satisfies $\sigma < \sigma_{\text{threshold}}$, thereby ensuring asymptotic stability over the entire domain $\mathcal{X}_z$. By Propositions \ref{prop:equivalence} and \ref{prop:input_norm_equivalence}, the synthesized neural controller $u = u_* + S_u \bar{\psi}\big(S(x - x_*)\big)$ formally guarantees to stabilize the iron ball at the desired height $h_*$. The closed-loop state and input evolutions of the physical dynamics \eqref{eq:maglev_dynamics} initialized at $x(0) = [\lambda_*, 0.008, 0]^\top$ are displayed in Figure \ref{fig:maglev_simulation}, where the state trajectories converge to the target equilibrium point $x_*$ and the control input remains within the set $\mathcal{U}$.
				
				For the closed-loop dynamics, we numerically determine the set $\mathcal{H} := \{z \in \mathbb{R}^3 \mid \dot V_z(z) < 0\}$, with $\dot V_z(z)  = \frac{\partial V_z(z)}{\partial z} Z_1 \bar{\Phi}_0^{\dagger} \bar{\Phi}(z, \bar{\psi}(z))$,  over which the Lyapunov function 
				$V_z(z)$ decreases, and any sub-level set $\mathcal{R}_c := \{z  \in \mathbb{R}^3 \mid V_z(z) \le c\}$ with $c > 0$ contained within $\mathcal{H} \cup \{0\}$ yields an estimate of the ROA for the normalized closed-loop system. Figure~\ref{fig:maglev_simulation2} shows the planar projections of the set $\mathcal{H}$, a sublevel set of $V_z(z)$, and the target domain $\mathcal{X}_z$. Note that the green set is larger than the projection of any sub-level set of $V_z(z)$ contained in $\mathcal{X}_z$, and hence the obtained neural controller performs better than theoretically predicted.
				
				\begin{figure}[ht!] \centering
					\includegraphics [scale=0.35] {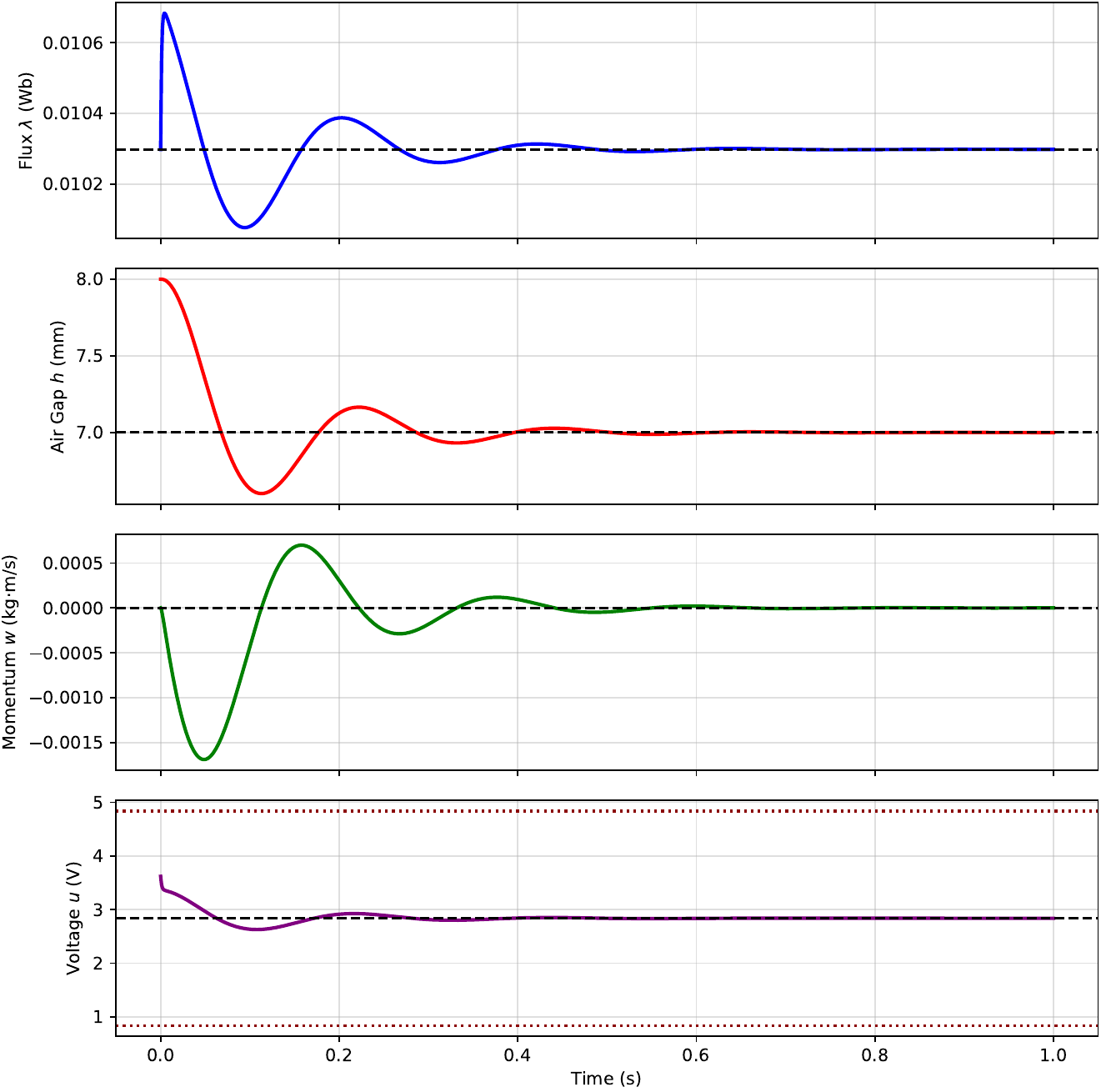}
					\caption{State and control responses of the closed-loop magnetic levitation system, initialized at state $x(0) = [\lambda_*, 0.008, 0]^\top$.}\label{fig:maglev_simulation}
				\end{figure}
				
				\begin{figure}[ht!] \centering
					\includegraphics [scale=0.4] {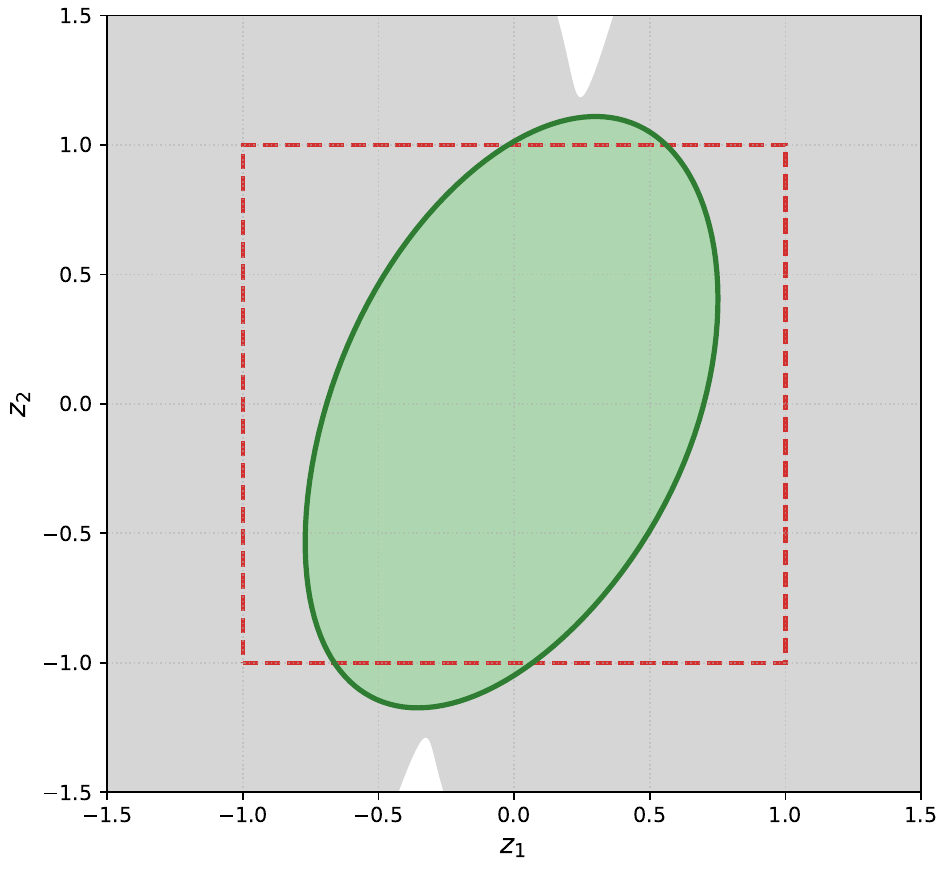}
					\caption{The gray set, the blue set, and the red dashed box represent the projections of the set $\mathcal{H}$, the Lyapunov sublevel set $\mathcal{R}_c$ with $c = 2.3660$, and the domain $\mathcal{X}_z$ onto the plane $\{z \in \mathbb{R}^3 \mid z_3 = 0\}$, respectively.}
					\label{fig:maglev_simulation2}
				\end{figure}
				
			\end{example}												
		
		\section{Robustness to Disturbances} \label{sec:robustness}
		In this section, we extend the previous results to scenarios
		where additive perturbations corrupt the identification data. Note that offline data acquisition during the system identification phase is the sole step that interfaces
		with the physical plant, rendering the identification data
		susceptible to environmental disturbances. In turn, the subsequent controller synthesis and verification phase is conducted within a deterministic numerical framework, which is insulated from physical measurement
		noise. Hence, the exclusive source of uncertainty is the \emph{parametric} mismatch of the identified system
		matrix.
		Building upon this, we develop robust control design methods
		that explicitly account for the uncertainties arising from the
		corrupted identification data by enforcing the
		Lyapunov stability conditions over the entire set of data-consistent models.
 		
 		\subsection{Robust Stability Conditions under Data Perturbations}
		Consider the scenario where additive perturbations corrupt the state derivative measurement and the collected samples satisfy:
		\begin{equation}
			\dot{x}(t_{i}) = A Z(x(t_{i}), u(t_{i})) + d(t_{i}), \quad i = 0, \dots, T-1
		\end{equation}
		where $d(t_{i}) \in \mathbb{R}^{n}$ represents the unknown perturbation at sampling time $t_i$. Accordingly, the data matrices $X_1$ and $Z_0$ defined in \eqref{eq:data} satisfy the following relation:
		\begin{equation}
			X_{1} = A Z_{0} + D_{0},
		\end{equation}
		where $D_{0} := [d(t_{0}), d(t_{1}), \dots, d(t_{T-1})] \in \mathbb{R}^{n \times T}$ is the unknown data perturbation matrix. To perform robust synthesis, we formalize our prior knowledge regarding $D_{0}$ through the following bounded uncertainty set assumption.
		
		\begin{assumption} \label{ass:noise}
						Given a matrix $\Delta$, 
			$D_0\in \mathcal{D} : =\{
			D\in \mathbb{R}^{n\times T}\colon DD^\top \preceq \Delta \Delta^\top
			\}$.  \hfill $\square$ 
		\end{assumption}
		
		This formulation provides a versatile framework capable of accommodating various perturbation classes, including pointwise norm bounds, energy constraints, and stochastic disturbances. Under Assumption~\ref{ass:Z0}, the true system matrix can be expressed as $A = (X_{1} - D_{0}) Z_{0}^{\dagger}$. Since $D_0$ is unknown, we aim to ensure that the Lyapunov stability conditions hold robustly across the entire set of system matrices compatible with the corrupted data, leading to the following robust stability guarantee. 
		
		\begin{theorem} \label{thm:robust_data}
						Consider the nonlinear system \eqref{sys.d}. Let Assumptions \ref{ass:f}, \ref{ass:Z0} and \ref{ass:noise} hold. 
		Let 
		$\mathcal{X}\subseteq \mathbb{R}^n$ be a set containing the origin. 
		Suppose there exist a scalar $\alpha >0$, neural networks $\phi$ and $V$ satisfying $\phi(0)=0$ and $V(0)=0$ such that for each $x \in \mathcal{X}\setminus \{0\}$
		\begin{subequations}\label{eq:robust_stability}
			\begin{align}
				& V(x) > 0, \label{eq:robust_pd}\\
				& 2 \frac{\partial V(x)}{\partial x} X_1 Z_0^\dag Z(x,\phi(x)) + \alpha \frac{\partial V(x)}{\partial x} \Delta \Delta^\top \left(\frac{\partial V(x)}{\partial x}\right)^\top \nonumber \\
				& \quad + \frac{1}{\alpha} Z(x,\phi(x))^\top {Z_0^\dag}^\top Z_0^\dag Z(x,\phi(x)) < 0. \label{eq:robust_nd}
			\end{align} 
		\end{subequations}
		Then, the origin is asymptotically stable for the closed-loop system \eqref{eq:closed}.
		\end{theorem}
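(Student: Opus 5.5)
We use $V$ as a Lyapunov function for the true closed loop \eqref{eq:closed}, exactly as in the proof of Theorem~\ref{thm:control.free}. Positivity \eqref{eq:robust_pd} together with $V(0)=0$ makes $V$ positive definite on $\mathcal{X}$. It therefore remains to show that $\dot V(x)=\frac{\partial V(x)}{\partial x} A Z(x,\phi(x))<0$ for all $x\in\mathcal{X}\setminus\{0\}$. Since $A$ is unknown, the plan is to bound the contribution of the perturbation $D_0$ using \eqref{eq:robust_nd}.

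The first step is to express the true matrix through the data. Under Assumption~\ref{ass:Z0} we have $A=(X_1-D_0)Z_0^\dag$. Fix $x\in\mathcal{X}\setminus\{0\}$ and abbreviate $p:=\left(\frac{\partial V(x)}{\partial x}\right)^\top\in\mathbb{R}^n$ and $w:=Z_0^\dag Z(x,\phi(x))\in\mathbb{R}^T$. Then
\[
2\dot V(x)=2p^\top X_1 w-2p^\top D_0 w .
\]

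The second step, which is the key one, bounds the unknown cross term with Young's inequality. For any $\alpha>0$,
\[
-2p^\top D_0 w\le \alpha\, p^\top D_0D_0^\top p+\tfrac{1}{\alpha} w^\top w .
\]
This follows from $0\le \big\|\sqrt{\alpha}\,D_0^\top p+\tfrac{1}{\sqrt{\alpha}}w\big\|_2^2$. Assumption~\ref{ass:noise} gives $D_0D_0^\top\preceq\Delta\Delta^\top$, so $p^\top D_0D_0^\top p\le p^\top\Delta\Delta^\top p$. We also have $w^\top w=Z(x,\phi(x))^\top {Z_0^\dag}^\top Z_0^\dag Z(x,\phi(x))$.

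Substituting these bounds, $2\dot V(x)$ is at most the left-hand side of \eqref{eq:robust_nd}, which is negative. Hence $\dot V(x)<0$ on $\mathcal{X}\setminus\{0\}$ for the true system, uniformly over every admissible $D_0\in\mathcal{D}$. Standard Lyapunov theory then yields asymptotic stability of the origin for \eqref{eq:closed}. As in Theorem~\ref{thm:control.free}, one can add that sublevel sets of $V$ contained in $\mathcal{X}$ are ROA estimates.

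The only delicate point is the Young/S-procedure step: choosing the completion of squares so that the $\alpha$-weighting matches the $\alpha$ and $1/\alpha$ factors in \eqref{eq:robust_nd}, and replacing $D_0D_0^\top$ by $\Delta\Delta^\top$ in the correct direction. Everything else is routine.
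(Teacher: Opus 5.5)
Your proof is correct and takes essentially the same route as the paper. You write the true matrix as $(X_1-D_0)Z_0^\dag$, bound the cross term $-2\frac{\partial V}{\partial x}D_0Z_0^\dag Z(x,\phi(x))$ by Young's inequality with weights $\alpha$ and $1/\alpha$, and use $D_0D_0^\top\preceq\Delta\Delta^\top$ to reduce to \eqref{eq:robust_nd}. The differences are cosmetic: the paper argues over every data-consistent $D\in\mathcal{D}$ and swaps $\Delta\Delta^\top$ for $DD^\top$ before applying Young's inequality, while you fix the true $D_0$ and apply the two bounds in the opposite order.
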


			\emph{Proof.} The proof relies on demonstrating that $V(x)$ serves as a valid Lyapunov function for all consistent  systems of the form $\dot{x} = (X_{1} - D) Z_{0}^{\dagger} Z(x, \phi(x))$, $\forall D \in \mathcal{D}$.  This requires verifying two properties: $(i)$ $V$ is positive definite, and $(ii)$ its time derivative along the system trajectories satisfies
			\begin{equation}\label{eq:lyapunov}
				\frac{\partial V(x)}{\partial x} (X_1 -D) Z_0^\dag Z(x,\phi (x))<0 \quad \forall x \in \mathcal{X}\setminus \{0\}, \ \forall D\in \mathcal{D}.
			\end{equation} 
			The positive definiteness of $V$ is a direct consequence of $V(0) = 0$ together with \eqref{eq:robust_pd}. We then demonstrate that \eqref{eq:lyapunov} holds by virtue of \eqref{eq:robust_nd}. By Assumption \ref{ass:noise}, $\forall x \in \mathcal{X}\setminus \{0\}$, \eqref{eq:robust_nd} gives
			\[
			\begin{aligned}
				&2 \frac{\partial V(x)}{\partial x} X_1 Z_0^\dag Z(x,\phi(x)) + \alpha \frac{\partial V(x)}{\partial x} D D^\top  \frac{\partial V(x)}{\partial x}^\top
				\\&\quad + \frac{1}{\alpha} Z(x,\phi(x))^\top  {Z_0^\dag}^\top    Z_0^\dag Z(x,\phi(x)) <0 \quad   \forall D\in \mathcal{D}
			\end{aligned} 
			\]
			and by Young's inequality, we have
			\[
			\begin{aligned}
				&2 \frac{\partial V(x)}{\partial x} X_1 Z_0^\dag Z(x,\phi(x)) -  \frac{\partial V(x)}{\partial x} D Z_0^\dag Z(x,\phi(x))  
				\\&\quad -  Z(x,\phi(x))^\top  {Z_0^\dag}^\top   D^\top \frac{\partial V(x)}{\partial x}^\top  <0 \quad   \forall D\in \mathcal{D}
			\end{aligned} 
			\]
			that is, \eqref{eq:lyapunov} holds. The validation of the properties for $V$ is complete, which concludes the proof. \qedp
		
		\begin{rem} \label{rem:robust_stability_analysis}
			Note that the feasibility of \eqref{eq:robust_stability} hinges on the existence of  $V(x)$ and $\phi(x)$ that render the nominal closed-loop term			
				$2 \frac{\partial V(x)}{\partial x} X_1 Z_0^\dag Z(x,\phi(x))$			
			sufficiently negative to dominate both the data perturbation penalty $\alpha \frac{\partial V(x)}{\partial x} \Delta \Delta^\top \left(\frac{\partial V(x)}{\partial x}\right)^\top$ and the regressor-dependent term $\frac{1}{\alpha} Z(x,\phi(x))^\top {Z_0^\dag}^\top Z_0^\dag Z(x,\phi(x))$. Hence, the controllability of the nominal system plays a fundamental role in achieving this goal.
		\end{rem}
		
		\subsection{Robust Loss Design and Joint Synthesis} 
		To translate the robust stability conditions \eqref{eq:robust_stability} into the penalty terms within loss function \eqref{loss.total}, the term $L_{\text{ND}}$ \eqref{loss.nd} is modified to explicitly enforce the condition \eqref{eq:robust_nd}, which yields the robust penalty term $L_{\text{ND}}^{\text{R}}$ formulated as:
		\begin{equation}
			\begin{aligned}
				& L_{\text{ND}}^{\text{R,d}}(V_z, \psi, \alpha) \\
				& = \frac{1}{N} \sum_{k=1}^{N} \text{ReLU} \left( 2 \frac{\partial V_z(z_k)}{\partial z} Z_1 \Phi_0^{\dagger} \Phi(z_k, \psi(z_k)) \right. \\
				& \quad + \alpha \frac{\partial V_z(z_k)}{\partial z} \Delta_z \Delta_z^\top \left(\frac{\partial V_z(z_k)}{\partial z}\right)^\top \\
				& \quad \left. + \frac{1}{\alpha} \Phi(z_k, \psi(z_k))^\top (\Phi_0^{\dagger})^\top \Phi_0^{\dagger} \Phi(z_k, \psi(z_k)) + \varepsilon \Omega(z_k) \right).
			\end{aligned}
		\end{equation}
		Under the state normalization $z = Sx$, the normalized perturbation matrix is defined as $D_{z,0} := S D_0$, which satisfies $D_{z,0} D_{z,0}^\top \preceq \Delta_z \Delta_z^\top$ with $\Delta_z := S \Delta$ denoting the scaled perturbation bound matrix.
		
		Rather than treating $\alpha$ as an offline hyperparameter selected prior to the controller synthesis stage, our framework considers it as a \emph{decision variable} optimized jointly with $\psi(z)$ and $V_z(z)$. To achieve this, we re-parameterize $\alpha$ as: 
		\begin{equation} \label{eq:alpha_reparam}
			\alpha = e^{\theta_{\alpha}},
		\end{equation}
		where $\theta_{\alpha} \in \mathbb{R}$ is an unconstrained trainable parameter updated via backpropagation. This exponential mapping strictly guarantees $\alpha > 0$ by construction, thereby fulfilling the positivity prerequisite of Theorem \ref{thm:robust_data}.
		During joint training, the optimizer automatically adjusts $\theta_{\alpha}$ to dynamically balance the terms within the loss function. Upon convergence, the learned parameter evaluates to a deterministic constant $\alpha^* = e^{\theta_{\alpha}^*}$.
		
		\begin{rem}[Joint Optimization of $\epsilon$]
			\label{rem:joint_convergence_rate_opt}
			Analogous to the treatment of $\alpha$, rather than pre-selecting the margin parameter $\epsilon$ \textit{a priori}, it can be re-parameterized as:
			\begin{equation}
				\epsilon = e^{\theta_\epsilon},
				\label{eq:epsilon_reparam}
			\end{equation}
			where $\theta_\epsilon \in \mathbb{R}$ is a trainable  parameter updated jointly during training. Note that unconstrained minimization of the loss function tends to yield a trivial solution $\epsilon \to 0^+$, which dampens the closed-loop convergence rate. To circumvent this, a regularization term $\mathcal{L}_{\text{CR}}(\theta_\epsilon) := -\lambda_\epsilon \theta_\epsilon$ with $\lambda_\epsilon > 0$ is incorporated into the overall loss $L_{\text{total}}(V_z, \psi,\alpha, \epsilon)$.			
			The term $\mathcal{L}_{\text{CR}}(\theta_\epsilon)$ serves to maximize $\epsilon$ while maintaining the Lyapunov stability constraints $V_z(z) \ge \epsilon \Vert{}z\Vert{}^2$ and $\dot{V}_z (z) \le -\epsilon \|z\|^2$. 
		\end{rem}
				
		\subsection{Robust Local Stability Certification} 
		In the presence of state derivative measurement noise, the uncertain normalized closed-loop dynamics are given by
		\begin{equation} \label{eq:closed_loop_dynamics}
			\dot{z} = f_{\mathrm{cl}, D}(z) := (Z_1 - D_{z,0}) \Phi_0^{\dagger} \Phi(z, \psi(z)).
		\end{equation} 
		Let $J_{\Phi}(0) := \left. \frac{\partial \Phi(z,\psi(z))}{\partial z} \right|_{z=0} \in \mathbb{R}^{s \times n}$ denote the Jacobian of $\Phi(z,\psi(z))$ evaluated at the origin. The first-order Taylor's expansion of \eqref{eq:closed_loop_dynamics} around $z=0$ is expressed as
		\begin{equation} \label{eq:robust_closed_loop_taylor}
			\dot{z} = A_{\mathrm{cl}, D} z + g(z, D_{z,0}),
		\end{equation}
		where $A_{\mathrm{cl}, D} := A_0 - D_{z,0} \Phi_{0}^{\dagger} J_{\Phi}(0)$ with $A_0 := Z_{1} \Phi_{0}^{\dagger} J_{\Phi}(0)$ representing the nominal closed-loop Jacobian, and $g(z, D_{z,0})$ denotes the higher-order remainder satisfying $g(0, D_{z,0}) = 0$ and $\left. \frac{\partial g(z, D_{z,0})}{\partial z} \right|_{z=0} = 0$.
		
		To establish local asymptotic stability robustly against all perturbation matrices $D_{z,0} \in \mathcal{D}_z$, we consider a candidate local Lyapunov function $V_{\text{loc}}(z) = z^\top P z$ with $P \in \mathbb{S}^{n \times n}$. The time derivative of $V_{\text{loc}}(z)$ along the trajectories of \eqref{eq:robust_closed_loop_taylor} is given by
		\begin{equation} \label{eq:robust_v_dot_expansion}
			\dot{V}_{\text{loc}}(z) = z^\top \left( A_{\mathrm{cl}, D}^\top P + P A_{\mathrm{cl}, D} \right) z + 2 z^\top P g(z, D_{z,0}).
		\end{equation}
		By Young's inequality, the quadratic term in \eqref{eq:robust_v_dot_expansion} can be upper-bounded for any scalar $\varphi > 0$ as
		\begin{equation} \label{eq:robust_quadratic_bound}			
			\begin{aligned}				
				& z^\top \Big( A_{\mathrm{cl}, D}^\top P + P A_{\mathrm{cl}, D} \Big) z \\				
				&= z^\top \Big( A_0^\top P + P A_0 - J_{\Phi}(0)^\top (\Phi_{0}^{\dagger})^\top D_{z,0}^\top P \\				
				&\quad - P D_{z,0} \Phi_{0}^{\dagger} J_{\Phi}(0) \Big) z \\				
				&\le z^\top \Big( A_0^\top P + P A_0 + \frac{1}{\varphi} P \Delta_z \Delta_z^\top P \\				
				&\quad + \varphi J_{\Phi}(0)^\top (\Phi_{0}^{\dagger})^\top \Phi_{0}^{\dagger} J_{\Phi}(0) \Big) z.				
			\end{aligned}			
		\end{equation}      
		We then search for $P \succ 0$, $\varphi > 0$, and $\gamma > 0$ by solving the following SDP:   
		\begin{equation} \label{eq:robust_lmi_sdp_opt}
			\begin{aligned}
				\max_{P \succ 0, \varphi > 0, \gamma > 0} \quad & \gamma \\
				\text{s.t.} \quad & P \preceq I, \\
				& \begin{bmatrix} M_{11} & P \Delta_z \\ \ast & -\varphi I_n \end{bmatrix} \preceq 0,
			\end{aligned}
		\end{equation}
		where $M_{11} := A_0^\top P + P A_0 + \varphi J_{\Phi}(0)^\top (\Phi_{0}^{\dagger})^\top \Phi_{0}^{\dagger} J_{\Phi}(0) + \gamma I$.
		By Schur complement, the feasibility of \eqref{eq:robust_lmi_sdp_opt} guarantees that $z^\top \left( A_{\mathrm{cl}, D}^\top P + P A_{\mathrm{cl}, D} \right) z \le -\gamma \|z\|_2^2$ holds robustly for all $D_{z,0} \in \mathcal{D}_z$. If the SDP \eqref{eq:robust_lmi_sdp_opt} is infeasible, the candidate controller is rejected, and the algorithm reverts to the training loop to re-synthesize $\psi(z)$ and $V_z(z)$.     
		
		On the other hand, if the SDP \eqref{eq:robust_lmi_sdp_opt} is feasible, by Cauchy-Schwarz inequality together with Taylor's theorem, the higher-order cross-term in \eqref{eq:robust_v_dot_expansion} satisfies
		\begin{equation} \label{eq:robust_higher_order_cross_term_bound}
			2 z^\top P g(z, D_{z,0}) \le \sum_{i=1}^n \|P_{i,\cdot}\|_2 H_{i, D} \|z\|_2^3,
		\end{equation}
		where $H_{i, D} := \max_{D_{z,0} \in \mathcal{D}_z} \max_{z \in \mathcal{B}_{\sigma}(0)} \|\nabla^2 f_{\mathrm{cl}, D, i}(z)\|_2$ denotes a uniform upper bound on the Hessian spectral norm for the $i$-th scalar component of $f_{\mathrm{cl}, D}$ over $\mathcal{B}_{\sigma}(0)$. Recalling that the nominal closed-loop dynamics are given by $f_{\mathrm{cl}}(z) := Z_{1} \Phi_{0}^{\dagger} \Phi(z,\psi(z))$, $H_{i, D}$ can be conservatively upper-bounded by $\bar{H}_{i, D}$, which is defined as:
		\begin{equation} \label{eq:hessian_conservative_bound}
			\begin{aligned}
				\bar{H}_{i, D} := \; & \max_{z \in \mathcal{B}_{\sigma}(0)} \|\nabla^2 f_{\mathrm{cl}, i}(z)\|_2 \\
				& + \|\Delta_z\|_2 \|\Phi_0^{\dagger}\|_2 \sum_{j=1}^s \max_{z \in \mathcal{B}_{\sigma}(0)} \|\nabla^2 \Phi_j(z, \psi(z))\|_2.
			\end{aligned}
		\end{equation}      
		Let $K_{g, D} := \sum_{i=1}^n \|P_{i,\cdot}\|_2 \bar{H}_{i, D}$. Combining \eqref{eq:robust_quadratic_bound} and \eqref{eq:robust_higher_order_cross_term_bound} yields
		\begin{equation} \label{eq:robust_v_dot_final_bound}
			\dot{V}_{\text{loc}}(z) \le -\left( \gamma - K_{g, D} \|z\|_2 \right) \|z\|_2^2, \quad \forall D_{z,0} \in \mathcal{D}_z.
		\end{equation}
		Consequently, if the exclusion radius $\sigma$ satisfies
		\begin{equation} \label{eq:robust_sigma_threshold_criterion}
			\sigma < \sigma_{\text{threshold}} := \frac{\gamma}{K_{g, D}},
		\end{equation}
		$\dot{V}_{\text{loc}}(z) < 0$ is guaranteed to hold for all $z \in \mathcal{B}_{\sigma}(0) \setminus \{0\}$ across the entire set of data-consistent dynamics. Together with the SMT verification over $\mathcal{X}_{\text{SMT}}$, this establishes
		a formal robust stability guarantee across the entire domain $\mathcal{X}_z$. Conversely, if \eqref{eq:robust_sigma_threshold_criterion} is violated, the candidate controller is rejected, and the algorithm reverts to the training loop to re-synthesize $\psi(z)$ and $V_z(z)$.
		
		\begin{rem} \label{rem:robust_sdp_objective_and_sampling}
			Analogous to \eqref{eq:lmi_sdp_opt}, the SDP \eqref{eq:robust_lmi_sdp_opt} inherently optimizes $\sigma_{\text{threshold}}$. Furthermore, to compute $\bar{H}_{i, D}$ ($i = 1, \dots, n$) via \eqref{eq:hessian_conservative_bound}, a set of state samples $\{z_k\}_{k=1}^{\bar{N}}$ is densely drawn within $\mathcal{B}_{\sigma}(0)$. At each sample point, $\|\nabla^2 f_{\text{cl}, i}(z_k)\|_2$ and  $\|\nabla^2 \Phi_j(z_k, \psi(z_k))\|_2$ ($j = 1, \dots, s$) are evaluated using PyTorch. By taking the maximums over all evaluated samples and incorporating $\Vert{}\Delta_z\Vert{}_2$ along with $\|\Phi_0^{\dagger}\|_2$, the bounds $\bar{H}_{i, D}$ are determined.
		\end{rem}
		
		\begin{example}
			Consider the dynamics of an inverted
			pendulum	
			\begin{equation} \label{eq:inverted_pendulum}
				\begin{aligned}
					\dot{x}_1 &= x_2, \\
					\dot{x}_2 &= -\frac{\zeta}{m \ell^2} x_2 + \frac{g}{\ell} \sin x_1 + \frac{1}{m \ell^2} u,
				\end{aligned}
			\end{equation}
			where $x_1$ and $x_2$ denote the angular position and velocity, respectively, and $u$ is the applied torque. Note that the origin, representing the pendulum upright position, is an unstable equilibrium, and the control objective is to stabilize the closed-loop system at the origin.
			
			Let $\mathcal{X} = \{x \in \mathbb{R}^2 \mid -b_i \le x_i \le b_i, \; i = 1, 2\}$ denote the target stabilization domain, and $S = \text{diag}(\frac{1}{b_1}, \frac{1}{b_2})$ be the state scaling matrix.	By expressing \eqref{eq:inverted_pendulum} in terms of the normalized state $z = Sx$, the transformed dynamics governing $z$ are formulated as:
			\begin{equation} \label{eq:normalized_pendulum_dynamics}
				\begin{aligned}
					\dot{z}_1 &= \frac{b_2}{b_1} z_2, \\
					\dot{z}_2 &= -\frac{\zeta}{m \ell^2} z_2 + \frac{g}{\ell b_2} \sin(b_1 z_1) + \frac{1}{m \ell^2 b_2} u.
				\end{aligned}
			\end{equation}			
			Accordingly, we construct the normalized function library as $ \Phi(z, u) = [z_1, z_2, \sin(b_1 z_1), u]^\top$.
							
			In this example, we evaluate the proposed robust neural control framework subject to identification data perturbations. As the first step, we collect the dataset $\mathbb{D}$ from \eqref{eq:inverted_pendulum} with physical parameters $m = 0.1$, $\ell = 2.0$, $g = 9.81$, and $\zeta = 0.01$. The bounds defining the domain $\mathcal{X}$ are set to $b_1 = 3.14$ and $b_2 = 2.0$. Specifically, we collect $T = 50$ samples  by uniformly sampling the control input $u \in [-2, 2]$ and the state $x \in \mathcal{X}$. The collected samples are corrupted by additive perturbations $d(t_i)$ satisfying $\|d(t_i)\|_{\infty} \le \rho_d$ with $\rho_d = 0.01$. Accordingly, the scaled perturbation bound matrix is computed as $\Delta_z = S \sqrt{T} \rho_d I_2$. The scalar parameter $\theta_{\alpha}$ is initialized to $-2.3$ (corresponding to $\alpha \approx 0.1$) and optimized jointly with the network parameters.
			
			For network configuration, both $V_z(z)$ and $\psi(z)$ are parameterized using feedforward neural networks with one hidden layer of $50$ neurons and $\tanh$ activation functions. The bias-elimination architecture is implemented to strictly enforce the equilibrium constraints $V_z(0) = 0$ and $\psi(0) = 0$. Prior to the joint synthesis phase, a warm-up stage is executed wherein $V_z(z)$ is pre-trained to approximate $V_{\text{target}}(z) =  \Vert{}z\Vert{}_2^2$. Regarding the loss function, we set the convergence margin $\epsilon = 0.1$ and the penalty hyperparameters to $\gamma_1 = \gamma_2 = 10$. During joint synthesis, the network parameters are optimized via the Adam optimizer with a learning rate of $5 \times 10^{-4}$.
			
			The training phase begins with the construction of the training dataset $\overline{\mathbb{D}}$. Initially, $\overline{\mathbb{D}}$ is populated with $50$ batches of samples, each with a batch size of $300$, generated via uniform state sampling across the domain $\mathcal{X}_z$. Let the exclusion ball be of radius $\sigma = 0.05$, and thus the SMT verification domain is $\mathcal{X}_{\text{SMT}} = \{z \in \mathbb{R}^2 \mid z \in [-1, 1]^2 \setminus \mathcal{B}_{\sigma}(0)\}$. Within each training loop, the counterexample search is executed via the PGD algorithm. Both the hybrid sampling strategy and the PGD algorithm share the same configuration as detailed in Example \ref{ex:lorenz}. For formal stability verification, we employ the SMT solver dReal to validate the normalized counterpart of the robust Lyapunov conditions \eqref{eq:robust_stability}  over $\mathcal{X}_{\text{SMT}}$ with a precision tolerance of $\delta = 2 \times 10^{-5}$.
			
			In the numerical evaluation, dReal successfully certifies that the robust Lyapunov conditions hold universally across $\mathcal{X}_{\text{SMT}}$. Upon convergence, the optimized parameter yields $\alpha^* = e^{\theta_{\alpha}^*}= 0.0152$. Furthermore, regarding the local domain $\mathcal{B}_\sigma(0)$, evaluating the Hessian bounds $\bar{H}_{i, D}$ over $\mathcal{B}_{\sigma}(0)$ determines a threshold $\sigma_{\text{threshold}} = 2.5523$ that satisfies $\sigma < \sigma_{\text{threshold}}$, thereby ensuring
			asymptotic stability over the entire domain $\mathcal{X}_z$ under data perturbations. By Proposition \ref{prop:equivalence},  the synthesized neural controller $u = \psi(Sx)$ is formally guaranteed to stabilize the inverted pendulum at the upright equilibrium position. 						
			
			Let 
			\begin{equation}
				\begin{aligned}
					\mathcal{W}(x) :=&\, 2\frac{\partial V_z(Sx)}{\partial x} X_{1}Z_{0}^{\dagger} Z(x, \psi(Sx)) \\
					& + \alpha^* \frac{\partial V_z(Sx)}{\partial x} \Delta \Delta^\top \left(\frac{\partial V_z(Sx)}{\partial x}\right)^\top \\
					& + \frac{1}{\alpha^*} Z(x,\psi(Sx))^\top (Z_0^\dagger)^\top Z_0^\dagger Z(x,\psi(Sx)).
				\end{aligned}
				\label{eq:robust_dissipation}
			\end{equation}
			For the closed-loop dynamics, we numerically determine the set $\mathcal{H}_{\mathcal{W}} := \{x \in \mathbb{R}^2 \mid \mathcal{W}(x) < 0\}$, over which the Lyapunov function 
			$V_z(Sx)$ decreases, and any sub-level set $\mathcal{R}_c := \{x  \in \mathbb{R}^2 \mid V_z(Sx) \le c\}$ with $c > 0$ contained within $\mathcal{H}_{\mathcal{W}} \cup \{0\}$ yields an estimate of the ROA for the closed-loop system. The set $\mathcal{H}_{\mathcal{W}}$, 
			a sublevel set of $V_z(Sx)$, and the domain $\mathcal{X}$ are shown in Figure \ref{fig:pendulum_roa_noise}. Notably, the green set is larger than any sub-level set of $V_z(Sx)$ contained in $\mathcal{X}$. 
						
			Note that since $\Delta = \sqrt{T} \rho_d I_2$, the uncertainty bound increases with the number of identification samples $T$. As a result, a larger $T$ introduces greater conservatism, making it more challenging to synthesize $\phi(x)$ and $V(x)$ that satisfy \eqref{eq:robust_stability}. In our experiments,  robust stability can be certified up to $T = 1000$, which validates the effectiveness of the proposed approach. Furthermore, we observe that $\sigma_{\text{threshold}}$ decreases as $T$ increases, which is consistent with the fact that $K_{g, D}$ increases with $T$.
			
			\begin{figure}[ht!] \centering
				\includegraphics [scale=0.35] {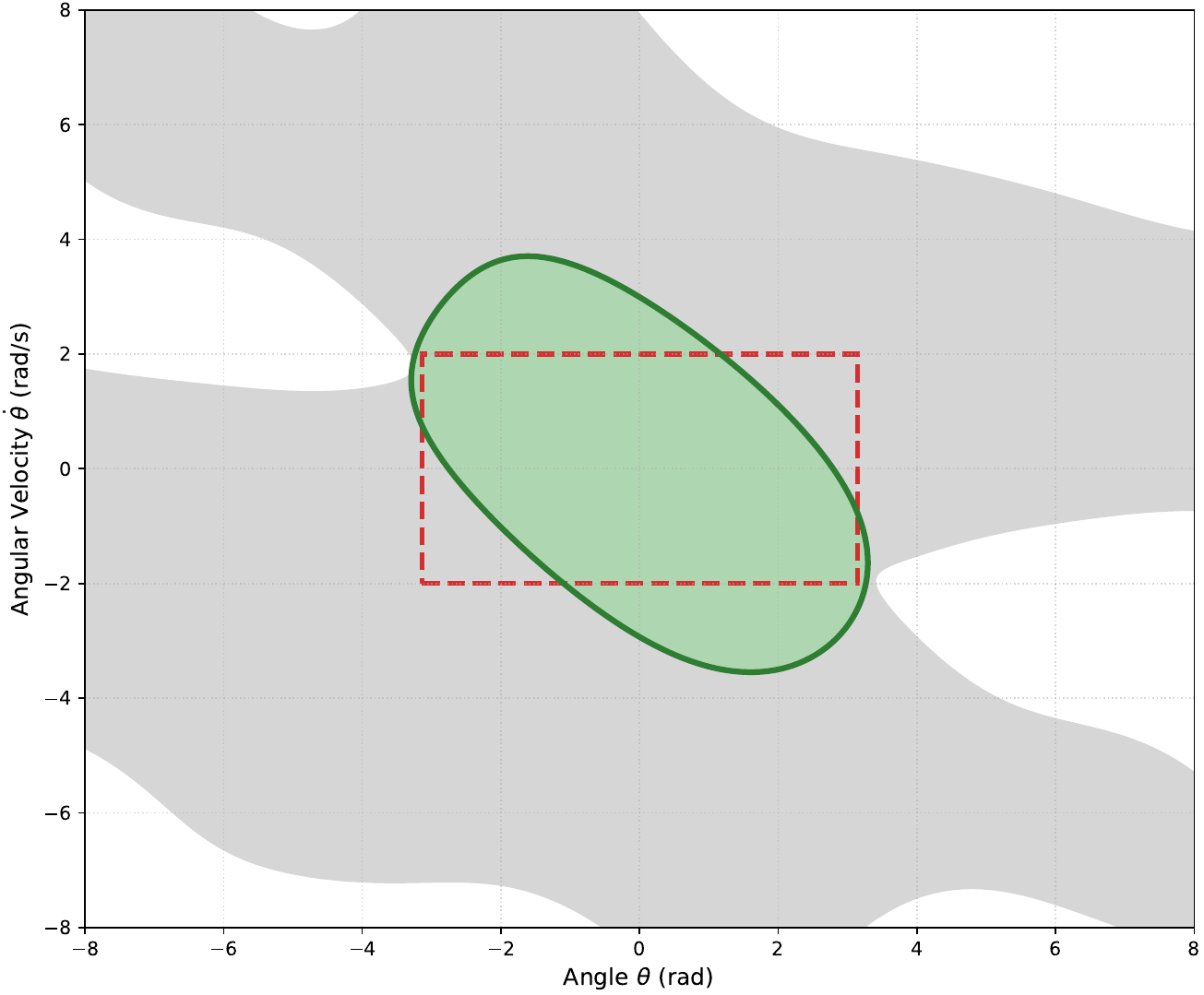}
				\caption{The grey set represents the set $\mathcal{H}_{\mathcal{W}}$, the green set is a sub-level set $\mathcal{R}_c$ contained in $\mathcal{H}_{\mathcal{W}} \cup \{0\}$ with $c = 3.4224$, and the red dashed box delineates the verified domain $\mathcal{X}$.}\label{fig:pendulum_roa_noise}
			\end{figure}

		\end{example}

		\section{Conclusion} \label{sec:conclusion}
		We have introduced a data-driven framework for the joint synthesis and formal verification of neural feedback controllers and neural Lyapunov functions for unknown nonlinear systems. We adopt an indirect control paradigm, wherein the system model is first identified from offline data, and the neural feedback law and Lyapunov function are subsequently synthesized based on the identified dynamics. To accommodate physical actuator limits, a hard saturation architecture is incorporated to enforce input constraints by construction. Furthermore, a robust extension is developed to ensure stability across all data-consistent models under identification data perturbations. Future research should focus on mitigating computational complexity for high-dimensional systems, as well as extending the proposed framework to output-feedback control.

\bibliographystyle{IEEEtran}
\bibliography{refs-4}

\end{document}